\theoremstyle{remark}
\newtheorem{assum}{Assumption}
\theoremstyle{definition}
\newtheorem{definition}{Definition}[section]
\newtheorem{remark}{Remark}
\newtheorem{theorem}{Theorem}
\newtheorem*{theorem*}{Theorem}
\newtheorem*{result*}{Result}
\newcommand{\prox}{\mathop{\rm prox}\nolimits}
\newcommand{\Real}{\mathbb{R}}
\newcommand{\Tra}{^{\sf T}} 
\newcommand{\amp}{\mathop{\:\:\,}\nolimits}
\definecolor{blue}{rgb}{0.2,0.5,0.7}
\definecolor{green}{rgb}{0.3,0.68,0.29}
\definecolor{purple}{rgb}{0.6,0.31,0.64}
\title{Proximal Hamiltonian Monte Carlo}
\author{Apratim~Shukla, 
        Dootika~Vats, 
        and~Eric~C.~Chi%
        \thanks{Apratim Shukla and Dootika Vats are with the Department of Mathematics and Statistics, Indian Institute of Technology Kanpur, India (e-mail:~apratims21@iitk.ac.in; dootika@iitk.ac.in).}%
        \thanks{Eric C. Chi is with the School of Statistics, University of Minnesota, Minneapolis, MN, USA (e-mail:~echi@umn.edu).}%
}
\date{\today}
\begin{document}
\singlespacing
\maketitle


\begin{abstract}
Bayesian formulation of modern day signal processing problems has called for improved Markov chain Monte Carlo (MCMC)  sampling algorithms for inference. The need for efficient sampling techniques has become indispensable for high dimensional distributions that often characterize many core signal processing problems, e.g., image denoising, sparse signal recovery, etc. A major issue in building effective sampling strategies, however, is the non-differentiability of the underlying posterior density. 
Such posteriors are popular in models designed to recover sparse signals. As a result, the use of efficient gradient-based MCMC sampling techniques becomes difficult. We circumvent this problem by proposing a Proximal Hamiltonian Monte Carlo (p-HMC) algorithm, which leverages elements from convex optimization like proximal mappings and Moreau-Yosida (MY) envelopes within Hamiltonian dynamics. Our method improves upon the current state of the art non-smooth Hamiltonian Monte Carlo as it achieves a relatively sharper approximation of the gradient of log posterior density and a computational burden of at most the current state-of-the-art. A chief contribution of this work is the theoretical analysis of p-HMC. We provide conditions for geometric ergodicity of the underlying HMC chain. On the practical front, we propose guidance on choosing the key p-HMC hyperparameter -- the regularization parameter in the MY-envelope. We demonstrate p-HMC's efficiency over other MCMC algorithms on benchmark problems of logistic regression and low-rank matrix estimation.
\end{abstract}

\begin{IEEEkeywords}
    Statistical Signal Processing, Hamiltonian Monte Carlo, Proximal Mapping, Moreau-Yosida Envelope, Sparse Logistic Regression, Low-Rank Matrix Estimation.
\end{IEEEkeywords}

\section{Introduction}
\IEEEPARstart{M}{odern} computational power has paved the way for increasingly complex models in statistical signal processing \cite{manolakis2005statistical}. Examples include models for compressed sensing \cite{MR2241189}, signal denoising \cite{MR1379464, MR1311089, costa2015sparse}, image denoising \cite{doi:10.1137/120874989}, image deconvolution \cite{chaux2006iterative}, superresolution \cite{park2003super,doi:10.1137/15M1016552}, magnetic resonance image reconstruction \cite{lustig2007sparse}, and computer vision \cite{10.1007/978-3-540-24673-2_23}, to name a few. A defining feature of these models is that they admit algorithms that recover low-dimensional characteristics from high-dimensional data. Within these model spaces, Bayesian models that employ sparsity inducing penalties furnish both point estimates of model parameters as well as desirable uncertainty estimates.
Inference for Bayesian signal processing models depends on reliably  generating samples from an often intractable posterior distribution. While generating  independent and identically distributed (iid) samples from such complex and high-dimensional posterior distributions would be ideal, it is usually not possible beyond small dimensions. Fortunately, we can generate samples using a Markov chain  whose stationary and limiting distribution coincides with our desired posterior distribution via Markov chain Monte Carlo (MCMC) methods \cite{1188770, MR2080278}. These methods have found wide-spread use in signal-processing including radar processing \cite{7089944}, sequential data processing \cite{MR2766856}, audio restoration \cite{o1994interpolation}, image sequence restoration \cite{559620} and various other signal/image processing problems \cite{FITZGERALD20013, 1550195, 6952466, 6945861, Luengosurvey}. The quality of Bayesian inference across applications relies critically on the quality of the MCMC sampler employed. 

Metropolis-Hastings (MH) is a popular MCMC algorithm which  generates candidate iterates from a proposal distribution. The algorithm then accepts or rejects candidates with a probability calibrated to ensure the process' stationary distribution coincides with the desired target posterior.  For high-dimensional target distributions, uninformed MCMC proposals like the random-walk Metropolis struggle to efficiently explore high probability areas of the posterior. By contrast, informed gradient-based MCMC schemes utilize the local geometry of a differentiable target density to steer proposals towards areas of high probability. These strategies have proven to be quite effective. Hamiltonian Monte Carlo (HMC) \cite{hanson2001markov, neal2011mcmc}, in particular, has garnered widespread popularity for Bayesian neural networks \cite{izmailov2021bayesian}, for solving complex image inverse problems \cite{altmann2014unsupervised}, and for importance sampling \cite{mousavi2021hamiltonian, LLORENTE2022310}, to name a few research areas.  Differentiability of the target density is a critical requirement for HMC and other gradient-based MCMC schemes. Bayesian models for recovering structured sparsity, however, employ non-differentiable priors by design, making it challenging to construct efficient sampling strategies.

To address this gap, \cite{pereyra2016proximal} proposed a proximal version of the gradient-based Metropolis-adjusted Langevin algorithm (MALA) \cite{roberts1996exponential}. The idea is to generate proposals using a smooth approximation of a non-smooth target via the Moreau-Yosida (MY) envelope \cite{MR144188}. 
 That is, proximal MCMC algorithms employ the gradient of the smooth MY-envelope in the proposal distribution, enabling the learning of local geometry despite non-smoothness. This recent success of MY-envelopes and their related proximal maps in MCMC is due to the rich literature available in convex optimization, that allows simple formulations of proximal MCMC algorithms. 

In the spirit of \cite{pereyra2016proximal},\cite{chaari2016hamiltonian} proposed non-smooth HMC (ns-HMC), utilizing gradients from a smooth MY-envelope in HMC.  
A key reason prohibiting the wide adoption of ns-HMC is the computational cost. ns-HMC becomes computationally expensive if the proximal mapping of the log target density is not available analytically and must be computed via an iterative optimization algorithm. Whereas proximal MALA only requires two calls to the approximate gradient, ns-HMC  requires multiple calls per iteration. 
A second drawback is that the acceptance rate of the ns-HMC algorithm is detrimentally affected by the approximation of the gradient, leading to inefficient movement in the parameter space.

In light of these drawbacks, we study an alternative HMC algorithm to handle non-differentiable targets -- proximal HMC (p-HMC) that has been employed recently in \cite{guo2023above} for specific use in altimetry. While p-HMC showed promise empirically, its theoretical properties remained unexplored. In this paper, we revisit p-HMC, establish key theoretical properties, and provide guidance  for the selection of tuning parameters. Similar to \cite{durmus2022proximal}, p-HMC splits the log target density into smooth and non-smooth components, 
and approximates only the non-smooth component by its MY-envelope. This simple difference from ns-HMC yields three key advantages.
\begin{itemize}
    \item Closed-form MY-envelopes of many non-smooth functions are abundantly available \cite{parikh2014proximal} avoiding  expensive iterative solvers for calculating approximate gradients. 
    \item Since only part of the log target is approximated, the resulting gradient approximation is better, leading to improved mixing.
    \item It is applicable to non-convex negative log targets as long as the non-smooth component is convex. This generalizes the applicability to non-convex negative log likelihoods with sparsity inducing log-concave priors \cite{wright1999bayesian}.
\end{itemize}

\noindent\textbf{Contributions.} We establish theoretical properties of the p-HMC algorithm and evaluate its effectiveness in practice. Based on our theory and experiments, we provide practical guidance on its implementation. 
\begin{itemize}
    \item We show that the proposed p-HMC algorithm converges to the target distribution ensuring the desired MCMC asymptotics.
    \item We provide verifiable conditions for geometric ergodicity of the p-HMC sampler. Geometric ergodicity ensures that Monte Carlo estimates of means and credible intervals have finite variance, enabling reliable simulation.
    \item We provide guidelines for choosing the regularization parameter in the MY-envelope threading the trade-off between smoothness and  accuracy of the HMC leapfrog integrator.
    \item We demonstrate that p-HMC is significantly faster, easier to compute, and more effective in comparison to existing alternatives in illustrative case studies. 
\end{itemize}

The remainder of the paper is organized as follows. Section~\ref{sec:background} revisits the Bayesian model, reviews HMC, and surveys the literature on existing proximal MCMC methods for non-differentiable targets. Section~\ref{sec: prox_HMC} presents our proposed methodology. Section~\ref{sec:properties} presents theoretical properties of the proposed p-HMC algorithm. Section~\ref{sec:optimal_lambda} elaborates on the choice of an appropriate value of the regularization parameter in the MY-envelope. Section \ref{sec:examples} presents examples and demonstrates the superiority of our method over ns-HMC and other samplers. Finally, Section \ref{sec:conclusion} presents concluding remarks.

\section{Preliminaries} \label{sec:background}

\subsection{Bayesian Models}
Bayesian models are often used to model signal processing data as a way to systematically incorporate prior beliefs into signal recovery \cite{MR3524962}. Leveraging prior beliefs in this way turns poorly posed or ill-posed inverse problems into well-posed ones. For example, given an observed signal $y \in \Real^n$, a linear operator $B : \Real^{d} \mapsto \Real^n$, and a possibly nonlinear operator $\xi : \Real^{n} \mapsto \Real^n$, we seek to estimate the latent signal $x \in \Real^d$ through the model
\begin{equation} \label{eq: model_defn}
    y \amp = \amp \xi(Bx)\,.
\end{equation}
Model \eqref{eq: model_defn} describes a general class of signal processing problems. For example, $x$ may be an image that we wish to recover after corruption by a blurring filter $B$ and some noise process described by $\xi$, e.g., additive noise, multiplicative noise \cite{aubert2008variational}, or impulsive noise models \cite{cai2010fast}. To estimate $x$ in \eqref{eq: model_defn}, we seek the $x$ that minimizes some measure of discrepancy between $y$ and $\xi(Bx)$. For instance, if $\xi$ describes an additive Gaussian noise model
and $B$ is full rank, then the unique minimizing $x$ is the least squares solution $\hat{x} = \arg \min_{x \in \Real^d}\|y - Bx\|^2$; here $\|\cdot\|$ denotes Euclidean norm. However, such problems can be ill-posed, lacking uniqueness or even existence of solutions \cite{marnissi2017bayesian} when $B$ is not full rank as is often the case in modern signal processing problems. Bayesian models provide an estimation framework that can  circumvent these issues.

In Bayesian models, the latent signal $x$ is a random variable that follows a prior distribution with density $\pi_0(x)$. Given the data $y$, a model or likelihood $p(y | x)$ that specifies the relationship between $x$ and $y$, we seek a posterior density of $x$ conditioned on $y$, denoted $\pi(x | y)$ (see \cite{MR2723361} for more details),
\begin{eqnarray} \label{eq: post_eqn}
    \pi(x|y) &  \propto & p(y|x)\pi_0(x)\,.
\end{eqnarray}
The posterior can be interpreted as an update on the prior of $x$ using the information encoded in the data $y$ via the likelihood. Posteriors distributions can be set up quite generally and include, but are not restricted to, likelihoods associated with \eqref{eq: model_defn}. Often it is convenient to express a posterior density as 
\begin{eqnarray} \label{eq: model_form}
    \pi(x) & \propto & \exp(-U(x))\,,
\end{eqnarray}
where $U(x) := \log \pi(x | y)$ is called the potential function. Henceforth, with a slight abuse of notation, we will denote both the target density and distribution by $\pi$. Furthermore, for notational simplicity, we denote $\pi(x |y)$ as $\pi(x)$, dropping the  posterior's dependence on the data $y$.

An additional advantage of Bayesian methodology is that it enables rigorous statistical analysis via posterior summaries. The posterior mean and the maximum a posteriori (MAP) estimate provide point estimates of the latent signal $x$. Moreover, credible intervals provide quantification of posterior uncertainty in the parameters.  While sampling may not be required to obtain a MAP estimate, it is indispensable for obtaining posterior means and constructing credible intervals.


MCMC algorithms provide a way to sample from \eqref{eq: model_form} even when the normalizing constant is unknown. We now review MCMC basics. Let $\mathcal{B}(\Real^d)$ be the Borel sets of $\Real^d$. We  generate the iterates of discrete-time time-homogeneous Markov chain $\{X_t\}_{t\geq 1}$ with a Markov transition kernel $P: \Real^d \times \mathcal{B}(\Real^d) \mapsto [0,1]$. The kernel $P$ is the conditional distribution for obtaining $X_{t+1}$ given $X_t = x \in \Real^d$, i.e.,
\begin{equation*}
    P(x, A) \amp = \amp \text{Pr}(X_{t+1} \in A|X_{t} = x)\,,
\end{equation*}
for all $t \geq 1$ and events $A$ in $\mathcal{B}(\Real^d)$. MCMC algorithms construct kernel $P$ such that the posterior $\pi$ is its stationary distribution. If the Markov kernel $P$ is irreducible, aperiodic, and Harris recurrent, the $n$-step transition kernel converges to $\pi$ as $n \to \infty$, i.e.,
\begin{equation} \label{eq:erg_result}
   \lim_{n \to \infty} \|P^n(x, \cdot) - \pi(\cdot)\|_{\text{TV}} \amp = \amp 0 \,,
\end{equation}
where $\lVert\cdot \rVert_{\text{TV}}$ denotes the total variation distance (see \cite{meyn:twee:2009} for definitions). The MH algorithm is a powerful method to construct such Markov kernels. The class of MH algorithms is broad and  includes the random-walk Metropolis \cite{metr:1953}, MALA \cite{roberts1996exponential}, and HMC \cite{neal2011mcmc}. Of these, HMC is widely acknowledged to be the most effective algorithm, particularly suited for convex potentials.
\subsection{Hamiltonian Monte Carlo}

HMC is a popular MCMC algorithm, often exhibiting superior mixing compared to other alternatives \cite{beskos2013optimal,MR4003576}. We briefly overview HMC; for a more in-depth treatment, see \cite{neal2011mcmc}.

HMC is based on the equations of motion of particles in a system \cite{Duane1987216}.  At any given time $t$, the motion of a particle is described by its position $x \in \Real^d$ and momentum $p \in \Real^d$. The Hamiltonian, or total energy of the particle, is
\begin{equation} \label{eq:hamiltonian}
   H(x, p) \amp = \amp U(x) + K(p)\,,
\end{equation}
where $U(x)$ and $K(p)$ are the particle's potential and kinetic energies, respectively. The kinetic energy is defined to be $K(p) = p\Tra M^{-1}p/2$, where $M$ denotes the mass matrix of the particle.  Hamilton's equations of motion govern the particle's state in the $(x, p)$ space:
\begin{align} \label{eq: Hamilton_eq}
    \frac{dp^{(i)}}{dt} \amp = \amp \frac{\partial H(x, p)}{\partial x^{(i)}} \quad \text{and} \quad
    \frac{dx^{(i)}}{dt} \amp = \amp -\frac{\partial H(x, p)}{\partial p^{(i)}}\,,
\end{align}
for each component $i = 1, 2, \ldots, d$. If $(x_t, p_t)$ denotes the particle's state at time $t$, then the equations in \eqref{eq: Hamilton_eq} define a Hamiltonian flow operator that takes the particle's state at time $t$ to its state at time $t + \delta$. This is described by a mapping $T_\delta$ such that $T_{\delta}(x_t, p_t) = (x_{t+\delta}, p_{t+\delta})$. The solution to \eqref{eq: Hamilton_eq} defines the Hamiltonian path of constant energy. The movement along this path conserves the particle's total energy and describes its dynamics exactly. In the context of HMC, this is akin to sampling from the joint distribution with density,
\begin{equation} \label{eq: joint_hmc_targ}
    \omega(x, p) \amp \propto \amp \exp(-U(x))\exp\left(-\frac{p\Tra M^{-1}p}{2}\right)\,.
\end{equation}
Marginalizing $\omega(x,p)$ over $p$ yields  density $\pi(x)$.

If we can solve \eqref{eq: Hamilton_eq} exactly, we can sample from $\omega(x, p)$. In practice, however, analytical solutions to \eqref{eq: Hamilton_eq} are rare. Therefore,  \eqref{eq: Hamilton_eq} is numerically approximated by discretizing time at a small step size $\varepsilon$. A variety of symplectic integrators exist for constructing transformation mappings that approximately conserve the Hamiltonian. The most commonly used one is  the leapfrog integrator, which is based on the following $\varepsilon$ transition given a point $(x_0, p_0)$,
\begin{eqnarray}
    p_{\frac{\varepsilon}{2}} & = & p_{0} - \frac{\varepsilon}{2}\nabla U(x_{0}) \label{eq:leap_mom_half}\\
    x_{\varepsilon} &= &x_{0} + \varepsilon M^{-1}p_{\frac{\varepsilon}{2}} \label{eq:leap_pos_full}\\
    p_{\varepsilon} & = & p_{\frac{\varepsilon}{2}} - \frac{\varepsilon}{2}\nabla U(x_{\varepsilon})\,. \label{eq:leap_mom_last}
\end{eqnarray}
The integrator, henceforth denoted by the mapping $\tilde{T}_{\varepsilon, L}$, generates a proposal value $(x_{L\varepsilon}, p_{L\varepsilon})$ by taking these three  updates over $L$ ``leapfrog" steps. That is, $\tilde{T}_{\varepsilon, L} \approx T_{\varepsilon L}$. The sampled point is then accepted according to the Metropolis-Hastings-Green acceptance probability,
\begin{equation} \label{eq: HMC_MH_prob}
    \alpha(x_{0}, x_{L\varepsilon}) \amp = \amp \min\left\{1 , e^{-H(x_{L\varepsilon}, p_{L\varepsilon}) + H(x_{0}, p_{0})}\right\}\,.
\end{equation}
Note that as $\varepsilon \to 0$, the proposed point converges to the true solution implying that $\alpha(x_{0}, x_{L\varepsilon}) \to 1$. 

HMC and its variants \cite{devlin2024no} are common samplers employed in Bayesian signal processing as well as other Bayesian problems. Their popularity is in no small part due to their fast mixing in high-dimensions; HMC scales roughly as $O(d^{1/4})$, a significant improvement over other gradient-based schemes \cite{beskos2013optimal}. HMC and other gradient-based schemes, however, require $U(x)$ to be differentiable -- a non-trivial inconvenience for non-differentiable target densities. 

\subsection{Proximal methods in MCMC}
To apply gradient-based MCMC schemes to non-differentiable targets, we turn to proximal mappings and MY-envelopes. These building blocks enable us to infer the local geometry of non-differentiable targets without requiring differentiability.

 Let $\Gamma\left(\Real^d\right)$ denote the set of convex, lower-semicontinuous, and proper functions that map $\Real^d$ into $\Real \cup \{\infty\}$. 




\begin{definition}\label{def:moreau-envelope}
Given a positive scaling parameter $\lambda$, the Moreau-Yosida envelope of $\psi \in \Gamma\left(\Real^d\right)$ is 
\begin{eqnarray}
\label{eq:psi-lambda}
\psi^\lambda(x) & = & \inf_{y \in \Real^d} \left\{\psi(y) + \frac{1}{2\lambda}\|y-x\|^2\right\}\,.
\end{eqnarray}
\end{definition}
The infimum in \eqref{eq:psi-lambda} is uniquely attained since $\psi \in \Gamma\left(\Real^d\right)$. The minimizer of \eqref{eq:psi-lambda} defines the proximal mapping of $\psi$.

\begin{definition}\label{def:prox-operator}
Given a positive scaling parameter $\lambda$, the proximal mapping of $\psi \in \Gamma\left(\Real^d\right)$ is the operator
\begin{eqnarray}
    \label{eq:proximal_operator}
\prox_\psi^\lambda(x) & = & \underset{y \in \Real^d}{\arg\min}\; \left\{\psi(y) + \frac{1}{2\lambda}\|y-x\|^2\right\}\,.
\end{eqnarray}
\end{definition}
Proximal mappings are fundamental computational primitives in optimization algorithms \cite{Combettes2011}.
%
%
Equation \eqref{eq:proximal_operator} can be used in \eqref{eq:psi-lambda} to evaluate the expression of $\psi^{\lambda}$ explicitly. The MY-envelope of $\psi$ has several useful properties: i) 
 $\psi^{\lambda}$ is convex whenever $\psi$ is convex and ii) is continuously differentiable even if $\psi$ is not \cite{MR1491362}. 
 The gradient of $\psi^{\lambda}$ can be expressed in terms of the proximal mapping of $\psi$ \cite{parikh2014proximal}, 
 \begin{equation} \label{eq: grad_MY_env}
    \nabla \psi^{\lambda}(x) \amp = \amp \frac{1}{\lambda}\left(x - \prox_\psi^\lambda(x)\right)\,.
\end{equation}
 Furthermore, $\nabla \psi^{\lambda}$ is $1/\lambda$-Lipschitz and
 $\psi^{\lambda}$ converges to $\psi$ pointwise \cite{MR1491362}, i.e., for all $x \in \Real^d$, $    \lim_{\lambda \to 0} \psi^{\lambda}(x) = \psi(x)$.
Thus, $\psi^{\lambda}(x)$ can smoothly approximate $\psi(x)$ arbitrarily well provided that $\lambda$ is sufficiently small. 

Proximal MCMC methods approximate non-smooth potential functions with MY-envelopes to leverage state-of-the-art gradient-based sampling schemes. Such schemes not only require differentiability but also Lipschitz differentiability to ensure their efficiency and reliability.  
%
%
For example, \cite{pereyra2016proximal} approximates a target density $\pi$ with potential $U \in \Gamma\left(\Real^d\right)$ using 
\begin{equation*}
    \pi^{\lambda}(x) \amp \propto \amp \exp\left(-U^{\lambda}(x)\right)\,, 
\end{equation*}
where $U^{\lambda}$ is the MY-envelope of $U$. They further employ gradients of $U^{\lambda}$ in a MALA-like algorithm yielding proximal Metropolis-adjusted Langevin algorithm (p-MALA). Specifically,  given the current state $x_t$, the next proposed state is 
\begin{equation} \label{eq:ULA_prop}
     x^* \amp = \amp x_t - \frac{h}{2}\nabla U^{\lambda}(x_t) + \sqrt{h}\zeta\,,
\end{equation}
where $\zeta \sim N(0, \mathbb{I}_d)$ and $h$ is a step size parameter. 
The proposal \eqref{eq:ULA_prop} is accepted, i.e., $x_{t+1} = x^*$,  according to the MH accept-reject ratio
\begin{equation*}
    \alpha(x_t, x^*) \amp = \amp \min \left\{1, \frac{\pi(x^*)q(x^*, x_t)}{\pi(x_t)q(x_t, x^*)}\right\}\,,
\end{equation*}
where $q(x_t, x^*)$ is the proposal density for \eqref{eq:ULA_prop}. The choice of $\lambda$ in p-MALA is fixed to be $\lambda = h/2$.

MY-envelopes have also been proposed for HMC in \cite{chaari2016hamiltonian}, who presented the non-smooth HMC (ns-HMC). The ns-HMC employs $\nabla U^{\lambda}(x)$ in the leapfrog integrator of HMC, fixing $\lambda = 1$. The ns-HMC algorithm uses the following modifications to \eqref{eq:leap_mom_half}, \eqref{eq:leap_pos_full}, \eqref{eq:leap_mom_last} in the standard HMC algorithm
\begin{eqnarray} 
    p_{\frac{\varepsilon}{2}} & = & p_{0} - \frac{\varepsilon}{2}\left[x_0 - \prox_{U}^{1}(x_0)\right] \label{eq: ns_hmc_eqns_p_first} \\
    x_{\varepsilon} &= &x_{0} + \varepsilon p_{\frac{\varepsilon}{2}}  \label{eq: ns_hmc_eqns_x} \\
    p_{\varepsilon} & = & p_{\frac{\varepsilon}{2}} - \frac{\varepsilon}{2}\left[x_{\varepsilon} - \prox_{U}^{1}(x_{\varepsilon})\right]\,,  \label{eq: ns_hmc_eqns_p_second}
\end{eqnarray}
where $\prox_{U}^{1}$ denotes the proximal mapping of the potential $U$ for $\lambda = 1$. The steps in \eqref{eq: ns_hmc_eqns_p_first}-\eqref{eq: ns_hmc_eqns_p_second} are repeated $L$ times to obtain the proposed value $(x_{L\varepsilon}, p_{L\varepsilon})$.
The proposal is then accepted or rejected according to \eqref{eq: HMC_MH_prob}. The ns-HMC opens the door to using HMC-like algorithms to sample from non-differentiable targets. It, however, has some drawbacks.
\begin{itemize}
    \item The algorithm inherently assumes $\lambda = 1$ and does not explore the effect of varying $\lambda$ on the performance of the algorithm.
    
    
    \item In Bayesian models, the likelihood is often complex enough that the proximal mapping of the full potential, $U$, is not available analytically. Thus, generating each proposal of ns-HMC requires $L$ repeated calls to an iterative solver, i.e., a subroutine based on proximal splitting techniques like the forward-backward  algorithm, Douglas-Rachford splitting algorithm, Dykstra-like splitting method, etc (see \cite{Combettes2011} for details). This adds a large computational burden.

    \item The convexity of $U$ is a critical assumption and thus the sampler is not applicable to target densities that may not be log-concave.

    \item Lastly, often $U$ is composed of a smooth and a non-smooth part. Enveloping all of $U$ approximates both these parts leading to a cruder approximation than possibly necessary. 
\end{itemize}
This last point was recognized by \cite{durmus2022proximal} who propose approximating only the non-smooth part of $U$, yielding another alternative called Moreau-Yosida MALA (my-MALA) algorithm. 


\section{Proximal Hamiltonian Monte Carlo} \label{sec: prox_HMC}

In this section, we adapt the methodology of \cite{durmus2022proximal} to address drawbacks of the ns-HMC algorithm. 
Let $U:\Real^d \to \Real \cup \{\infty\}$ be a target potential in \eqref{eq: model_form} that can be expressed as 
\[
U(x) \amp = \amp f(x) + g(x)\,,
\]
where 
\begin{enumerate}
    \item $f: \Real^d \to \Real$ is a Lipschitz differentiable function with Lipschitz constant $C_f > 0$, i.e., for all $x, x' \in \Real^d$,
    \begin{equation*}
        \|\nabla f(x) - \nabla f(x')\| \amp \leq \amp C_f\|x - x'\|\,,
    \end{equation*}
    \item $g:\Real^d \to \Real \cup \{\infty\}$ is a convex, lower-semicontinuous, and proper function\,.
\end{enumerate}
Note that $f$ is not required to be convex unlike in \cite{chaari2016hamiltonian,durmus2022proximal}, either for the implementation or the ergodicity results of our algorithm. Thus, $U$ may be non-convex as long as $g$ is convex and $f$ is Lipschitz differentiable. 
%
For our method, we propose using gradients of the following approximation:
\begin{equation} \label{eq:potential_split}
    U^{\lambda_g}(x) \amp := \amp f(x) + g^{\lambda_g}(x)\,,
\end{equation}
where $g^{\lambda_g}$ denotes the MY-envelope of $g$ with parameter $\lambda_g$. We propose employing a proximal HMC (p-HMC) algorithm where each leapfrog step is
\begin{eqnarray}
    p_{\frac{\varepsilon}{2}} & = & p_{0} - \frac{\varepsilon}{2}\nabla U^{\lambda_{g}}(x_{0}) \label{eq: P_HMC_p_first}\\
    x_{\varepsilon} &= &x_{0} + \varepsilon p_{\frac{\varepsilon}{2}} \label{eq: P_HMC_x} \\
    p_{\varepsilon} & = & p_{\frac{\varepsilon}{2}} - \frac{\varepsilon}{2}\nabla U^{\lambda_{g}}(x_{\varepsilon})\,. \label{eq: P_HMC_p_second}
\end{eqnarray}
A total of $L$ steps are performed to get to the proposed value, that is, if $\tilde{T}^{\lambda_g}_{\varepsilon, L}$ denotes the deterministic leapfrog transition, then $\tilde{T}^{\lambda_g}_{\varepsilon, L}(x_0, p_0) = (x_{L\varepsilon}, p_{L\varepsilon})$ is the proposed value. Subsequently, this is accepted with probability,
\begin{eqnarray*}
        \alpha(x_{0}, x_{L\varepsilon}) & = & \min\left\{1 , e^{-H(x_{L\varepsilon}, p_{L\varepsilon}) + H(x_{0}, p_{0})}\right\}\,.
    \end{eqnarray*}
Algorithm~\ref{algo:p-HMC} provides pseudocode for the p-HMC sampler.

\begin{algorithm} [t]
\caption{p-HMC given $\varepsilon > 0$, $L \geq 1$ and $M \succ 0$}
\label{algo:p-HMC}
    \begin{enumerate}
    \item Given $X_{t} = x_{0}$, draw $p_{0} \sim N(0, M)$
    \item Use the leapfrog one-$\varepsilon$ step equations,
    \begin{eqnarray*}
    p_{\frac{\varepsilon}{2}} & = & p_{0} - \frac{\varepsilon}{2}\nabla U^{\lambda_{g}}(x_{0}) \\
    x_{\varepsilon} &= &x_{0} + \varepsilon M^{-1}p_{\frac{\varepsilon}{2}} \\
    p_{\varepsilon} & = & p_{\frac{\varepsilon}{2}} - \frac{\varepsilon}{2}\nabla U^{\lambda_{g}}(x_{\varepsilon})
\end{eqnarray*}
$L$ times sequentially to reach $(x_{0}, p_{0}) \rightarrow (x_{L\varepsilon}, p_{L\varepsilon})$
    \item Draw \(U \sim U(0,1)\) independently, and calculate
    \begin{eqnarray*}
        \alpha(x_{0}, x_{L\varepsilon}) & = & \min\left\{1 , e^{-H(x_{L\varepsilon}, p_{L\varepsilon}) + H(x_{0}, p_{0})}\right\}\,,
    \end{eqnarray*}
    where $H(x, p) = U(x) + \frac{1}{2}p^{\Tra}M^{-1}p$ denotes total energy at $(x, p)$
    \item If $U \leq \alpha(x_{0}, x_{L\varepsilon})$, then $X_{t+1} = x_{L\varepsilon}$
    \item Else $X_{t+1} = x_0$
\end{enumerate}
\end{algorithm}


The proposed p-HMC algorithm often admits efficient implementations since proximal mappings of various $g$ functions are available analytically \cite{parikh2014proximal}.
%
For instance, for $x \in \Real^d$ and $\gamma, \alpha > 0$, consider the popular structure,
\begin{equation} \label{eq: example_appln}
    \pi(x) \amp \propto \amp \exp\left\{-\left(\gamma \lVert x \rVert^2 + \alpha \lVert  x\rVert_1\right)\right\}\,.
\end{equation}
The proximal mapping for the $\ell_1$-norm has a closed form -- the soft-thresholding operator -- that can be computed in $O(d)$ flops. Thus, proposals can be generated quite cheaply since the computational cost of $\nabla U^{\lambda_{g}}$ is low. 
Moreover, even when a proximal mapping is available in closed form for both p-HMC and ns-HMC, our method performs better as is evidenced by the effective sample size  in numerical studies (Section~\ref{sec:examples}). This is because p-HMC only approximates $g$ but takes the correct gradient contribution from $f$ making  $U^{\lambda_g}$ a better approximation than $U^{\lambda}$. As a result, the Hamiltonian is better conserved thereby ensuring a reasonable acceptance rate. We revisit this point in  more detail in Section~\ref{sec:examples}.


\section{Properties of p-HMC} \label{sec:properties}
We analyze the asymptotic behavior of the p-HMC algorithm. Section ~\ref{sec:invariance} describes the existence of the stationary and limiting distribution of the p-HMC chain. Section~\ref{sec: geom_erg} details its rate of convergence. 

\subsection{Invariance and convergence} \label{sec:invariance}
The conditions for ensuring invariance for the correct target distribution in case of HMC is more nuanced than a usual MH algorithm. 
This is because HMC is composed of a deterministic proposal mechanism conditioned on a randomly sampled auxiliary variable. To navigate this, we turn to the generalised version of the MH algorithm for  deterministic proposals \cite{10.1093/biomet/82.4.711}, the Metropolis-Hastings-Green (MHG) algorithm.

The MHG algorithm consists of augmenting the state space with that of an auxiliary variable. The auxiliary variable is then chosen randomly conditioned on the initial state of the variable of interest. Subsequently, the augmented state is then mapped to a proposal deterministically which is accepted according to the MHG ratio. More specifically, let $v$ denote the auxiliary variable and $(x, v)$, the augmented space. The proposal value at any iteration $j+1$ is obtained in the following way: i) sample $v \sim q_{\text{aux}}(x_j, \cdot)$, the conditional distribution of the auxiliary variable given the previous state, and ii) employ a deterministic mapping $\phi(x_j, v) = (x^*, v^*)$ to obtain the proposed value. The resulting proposal value is then accepted with the following MHG probability, 
\begin{equation} \label{eq: MHG_ratio}
    \alpha(x_0, x^*) \amp = \amp \min \left\{1, \frac{\pi(x^*)q_{\text{aux}}(x^*, v^*)}{\pi(x_0)q_{\text{aux}}(x_0, v_0)}|\text{det}\,J_{\phi}|\right\}\,,
\end{equation}
where $J_{\phi}$ is the Jacobian of the mapping $\phi$ (see \cite{10.1093/biomet/82.4.711} for more details). In general, any algorithm employing \eqref{eq: MHG_ratio} requires evaluating $J_{\phi}$ for every iteration. However, HMC avoids this due to volume preservation. Volume preservation implies that the Jacobian of transformation in the $(x, p)$ space is unity \cite{neal2011mcmc}. Therefore, \eqref{eq: MHG_ratio} is equivalent to accepting with probability \eqref{eq: HMC_MH_prob}. 

Note that p-HMC employs the approximation $\nabla U^{\lambda_g}$ instead of $\nabla U$ and thus uses an approximate Hamiltonian in the proposal step. The proposal mechanism uses the leapfrog integrator, which is reversible and volume preserving. Therefore, using the MHG accept-reject probability in  \eqref{eq: MHG_ratio} suffices to ensure that the samples are asymptotically from $\pi$ (see \cite{neal2011mcmc} for a discussion). Consequently, an argument similar to \cite{neal2011mcmc} applies and the p-HMC algorithm is volume preserving, and thus defines a Markov kernel invariant for $\pi$.

Note that $\exp\left({-U^{\lambda_{g}}(x)}\right)$ does not need to be integrable. This eliminates the requirement of the potential $U^{\lambda_g}$ to define a distribution and makes our approach broadly applicable.


\subsection{Geometric ergodicity}
\label{sec: geom_erg}

The efficiency of an MCMC algorithm is typically characterized by its rate of convergence. In this respect, an important property is geometric ergodicity which we recall below.

\begin{definition}
    Let $P$ be an irreducible, aperiodic, and Harris recurrent Markov kernel with stationary distribution $\pi$. Then, for some $\rho < 1$ and $V: \Real^d \to \Real^+$ if,
    \begin{equation}
     \|P^{n}(x, \cdot) - \pi(\cdot)\|_{\text{TV}} \amp \leq \amp V(x)\rho^{n}\,, \qquad n = 1, 2, 3, \ldots
    \end{equation}
     for all $x \in \Real^d$, the Markov chain produced is said to be geometrically ergodic.
\end{definition}

Geometric ergodicity ensures that the Markov chain explores the state space reasonably well, ensuring also the existence of a central limit theorem (CLT). Existence of a CLT is important since it enables quantifying the uncertainty in estimators of both posterior means and quantiles \cite{MR3757517}. 
%
%
 Under suitable conditions, the original HMC algorithm has been shown to be geometrically ergodic \cite{MR4003576}. 
 
 We next establish sufficient conditions under which our p-HMC is geometrically ergodic. Similar to the the HMC proposal, the p-HMC proposal can be written as an $L$-step generalisation of the proximal Langevin scheme (see \cite{MR4003576} for details). Let $x_0$ be the current state and without loss of generality, $p_0 \sim N(0, M)$, where $M = \mathbb{I}_d$. The p-HMC proposal can then be expressed as
\begin{eqnarray} \label{eq:HMC_proposal}
    x_{L\varepsilon}  =  m_{L, \varepsilon}(x_0, p_0)  + L\varepsilon p_{0} \,,
    \end{eqnarray}
where,
\begin{equation*}
    m_{L, \varepsilon}(x_0, p_0) = x_{0} - \frac{L\varepsilon^{2}}{2}\nabla U^{\lambda_g}(x_{0}) - \varepsilon^{2}\sum_{i=1}^{L-1} (L-i)\nabla U^{\lambda_g}(x_{i\varepsilon})\,.
\end{equation*}
Furthermore, let the density of the proposal in \eqref{eq:HMC_proposal} be denoted by $q_{\text{pH}}(x_0, x_{L\epsilon})$ given the state $x_0$. The following assumption is made to ensure that the p-HMC chain is irreducible (see \cite{MR4003576} for details).
\begin{assum}
\label{ass:L}
The number of leapfrog steps $L$ is sampled from  a distribution $\mathcal{L}(\cdot)$ such that $\Pr_{\mathcal{L}}[L=1] > 0$ and for any $(x_{0}, p_{0}) \in \mathbb{R}^{2d}$ and $\varepsilon > 0$, there is an $s < \infty$ such that $\mathbb{E}_{\mathcal{L}}\left[e^{s\|x_{L\varepsilon}\|}\right] < \infty$.
\end{assum}
Let, 
\begin{equation*}
    R(x) \amp = \amp \left\{x^* : \dfrac{\pi(x^*)q_\text{pH}(x^*, x)}{\pi(x)q_\text{pH}(x, x^*)} < 1  \right\}\,
\end{equation*}
  denote the potential rejection region and $I(x)  =  \{x^* : \|x^*\| \leq \|x\|\}$ denote the set of points interior to $x$, respectively.

\begin{assum} \label{ass:living_limsup}
   Let $f$ and $g$ be such that
    \begin{enumerate}[label=(\alph*)]
        \item $\lim_{\|x\| \to \infty} \|\nabla f(x)\|  =  \infty$, 
        \item $\limsup_{\|x\| \to \infty} \dfrac{\|\nabla f(x)\|}{\|x\|}  < \infty$,
        \item $\liminf_{\|x\| \to \infty} \dfrac{\langle \nabla f(x), x \rangle}{\|\nabla f(x)\|\|x\|}  >  0$, and
        \item $\lim_{\|x\| \to \infty} \dfrac{\|\nabla g^{\lambda_g}(x)\|}{\|\nabla f(x)\|} < 1$.
     \end{enumerate}
\end{assum}
\begin{theorem} \label{thm:geom_erg}
    Under Assumptions \ref{ass:L} and \ref{ass:living_limsup}, a p-HMC ergodic Markov chain is  geometrically ergodic if,
    \begin{eqnarray} 
         \lim_{\|x\| \rightarrow \infty} \int_{R(x) \cap I(x)} q_{\text{pH}}(x, x^*) dx^* & = & 0 \,.
    \end{eqnarray}    
\end{theorem}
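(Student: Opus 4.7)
\medskip

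\noindent\textbf{Proof plan.} The plan is to establish geometric ergodicity through the standard drift-and-minorization framework of Meyn-Tweedie, adapting the HMC argument in \cite{MR4003576} to accommodate the MY-smoothed gradient $\nabla U^{\lambda_g}=\nabla f+\nabla g^{\lambda_g}$. Assumption~\ref{ass:L} together with the fact that $\Pr_{\mathcal{L}}[L=1]>0$ and the Gaussian momentum refresh makes $q_{\text{pH}}(x,\cdot)$ strictly positive and continuous on $\Real^d$, which in turn yields $\pi$-irreducibility, aperiodicity, and the property that every compact set is small. Minorization on compacta is therefore immediate. The real work is the geometric drift.

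I would choose the Lyapunov function $V(x)=e^{s\|x\|}$ for a small $s>0$ whose existence is guaranteed by Assumption~\ref{ass:L}, and aim to show
\[
PV(x) \;\leq\; \rho\, V(x) + b\,\mathbbm{1}_C(x)
\]
for some $\rho<1$, $b<\infty$, and a compact set $C$. Decompose
\[
PV(x) = \int \alpha(x,x^*)\,V(x^*)\,q_{\text{pH}}(x,x^*)\,dx^* + V(x)\int(1-\alpha(x,x^*))\,q_{\text{pH}}(x,x^*)\,dx^*.
\]
On the acceptance side, I would exploit the representation \eqref{eq:HMC_proposal}, $x_{L\varepsilon}=m_{L,\varepsilon}(x_0,p_0)+L\varepsilon p_0$. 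Assumption~\ref{ass:living_limsup}(b) controls the growth of each $\nabla U^{\lambda_g}(x_{i\varepsilon})$ so that $m_{L,\varepsilon}(x,p_0)$ stays on the scale of $\|x\|$, and parts (a), (c), and (d) together show that $\langle m_{L,\varepsilon}(x,p_0)-x,\,x\rangle/\|x\|^2\to -c<0$ as $\|x\|\to\infty$ on high-probability sets of $p_0$: $\nabla f$ asymptotically dominates $\nabla g^{\lambda_g}$ by (d), grows to infinity by (a), grows no faster than linearly by (b), and aligns with $x$ by (c). This is exactly the HMC-style coercive drift, and combined with the Gaussian tails of $p_0$ it yields $\mathbb{E}[V(x_{L\varepsilon})]\leq \tilde\rho V(x)$ with $\tilde\rho<1$ for $\|x\|$ large.

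On the rejection side, I need $\int (1-\alpha(x,x^*))q_{\text{pH}}(x,x^*)dx^* \to 0$ to push the overall bound below one; this is where the theorem's hypothesis enters. Note that $\{1-\alpha(x,x^*)>0\}=R(x)$, and rejections that remain in $I(x)^c$ do not increase $V$ relative to the acceptance-side gain (they send proposals outward but the chain stays at $x$, and any outward proposal would have increased $V$ anyway, so the \textit{harmful} rejections are precisely those on $R(x)\cap I(x)$: inward moves that are refused). Bounding
\[
\int(1-\alpha(x,x^*))\,q_{\text{pH}}(x,x^*)\,dx^* \;\leq\; \int_{R(x)\cap I(x)}\!\! q_{\text{pH}}(x,x^*)\,dx^* \;+\; \int_{R(x)\cap I(x)^c}\!\! q_{\text{pH}}(x,x^*)\,dx^*,
\]
the first term vanishes by hypothesis; the second, contributing outward proposals, is absorbed into the acceptance-side analysis by a truncation argument because the $V$-mass of outward Gaussian excursions of $p_0$ of size $\gg\|x\|$ is made negligible by the choice of small $s$. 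Putting these together gives the required $\rho<1$, from which the Meyn-Tweedie theorem delivers geometric ergodicity.

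The main obstacle will be the drift calculation in the second paragraph: I must control the $L$-step composition $m_{L,\varepsilon}$ with care because each intermediate $x_{i\varepsilon}$ depends nonlinearly on $p_0$ through previous gradient evaluations, so the clean asymptotics of Assumption~\ref{ass:living_limsup} must be propagated through the leapfrog recursion. The standard fix is an induction on $i$ showing $\|x_{i\varepsilon}-x_0\|\lesssim \|x_0\|$ on a set of $p_0$ of overwhelming Gaussian probability, then invoking (a)--(d) uniformly on that set. The second source of difficulty is that the asymmetry between inward and outward rejections requires careful bookkeeping of the $V$-weighted integrals, mirroring the argument for random-walk Metropolis in Jarner and Hansen's framework but complicated here by the deterministic $L$-step map.
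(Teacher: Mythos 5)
Your route (drift and minorization from scratch with $V(x)=e^{s\|x\|}$) is in principle viable --- it is essentially how the results of \cite{MR4003576} are themselves proved --- but the paper does not redo that work. The paper's proof is a reduction: it quotes two theorems of \cite{MR4003576}, one giving geometric ergodicity of HMC under (i) Assumption~\ref{ass:L}, (ii) a norm-contraction condition on the proposal mean $m_{L,\varepsilon}$, and (iii) exactly the rejection-region hypothesis appearing in the theorem statement; and a second giving three verifiable sufficient conditions on $\nabla U$ (norm diverging, asymptotically inward-pointing, sublinear growth) under which (ii) holds for small enough $\varepsilon$. All that remains --- and this is the entire content of the paper's proof --- is to check those three conditions for $\nabla U^{\lambda_g}=\nabla f+\nabla g^{\lambda_g}$ using Assumption~\ref{ass:living_limsup}. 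Your proposal re-derives the surrounding machinery (which you could instead have cited) but glosses over precisely this verification.

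That verification contains the one step your sketch does not justify. You assert that parts (a)--(d) of Assumption~\ref{ass:living_limsup} ``together show'' the inward drift, but Assumption~\ref{ass:living_limsup}(c) is a statement about $\nabla f$ only, and (d) only bounds $\|\nabla g^{\lambda_g}\|/\|\nabla f\|$ by a constant less than $1$; that constant may exceed the $\liminf$ in (c), in which case the term $\langle\nabla g^{\lambda_g}(x),x\rangle$ could in principle be negative enough to destroy the inward-pointing property of $\langle\nabla U^{\lambda_g}(x),x\rangle$. The missing ingredient is the convexity of $g$: since $g^{\lambda_g}$ is convex and differentiable, its gradient is monotone, so $\langle\nabla g^{\lambda_g}(x),x\rangle\geq\langle\nabla g^{\lambda_g}(0),x\rangle\geq-\|\nabla g^{\lambda_g}(0)\|\,\|x\|$, which is negligible against $\|\nabla U^{\lambda_g}(x)\|\,\|x\|$ because $\|\nabla U^{\lambda_g}(x)\|\to\infty$ by (a) and (d). Without this argument the drift step fails. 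A secondary inaccuracy: you claim $\langle m_{L,\varepsilon}(x,p_0)-x,x\rangle/\|x\|^2\to-c<0$, which does not follow from the assumptions since $\|\nabla U^{\lambda_g}(x)\|$ may be $o(\|x\|)$; what is needed, and what the cited sufficient conditions deliver, is the weaker statement that $\|m_{L,\varepsilon}(x,z)\|-\|x\|$ eventually drops below $-\sqrt{2}L\varepsilon\,\Gamma((d+1)/2)/\Gamma(d/2)$ uniformly over $\|z\|\leq\|x\|^{\delta}$.
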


\begin{proof}
   See Appendix~\ref{ge_thm_proof}.
\end{proof}

\begin{remark}
     Condition (a) of Assumption \ref{ass:living_limsup} ensures that the target density function $\pi(x)$ decays in the tails whereas condition (b) states that the rate of growth in $\nabla f(x)$ is asymptotically sublinear. Condition (c) is the `inwards convergence' condition (see \cite{MR4003576}). Condition (d) relates to the specific form of penalty function $g^{\lambda_g}$ compared to $f$. It implies that the rate of growth of $\nabla g^{\lambda}$ is not faster than that of $\nabla f$. This can occur naturally as we describe next.
 \end{remark}

\subsection{Example: The $\ell_1$-norm prior}

Consider $g(x) = \|x\|_1 = \|(x_{(1)}, \dots, x_{(d)})^{\Tra}\|$. This is the popular $\ell_1$-penalty which is common in many Bayesian problems \cite{parkandcasella, 382009}. Here,
    \begin{equation}
\label{eq:softthreshold}        \left[\prox_{g}^{\lambda_g}(x)\right]_i = \begin{cases}
            x_{(i)} - \lambda_g \quad &\text{if}\,\,\, x_{(i)} > \lambda_g \\
            x_{(i)} + \lambda_g \quad &\text{if}\,\,\, x_{(i)} < \lambda_g \,\quad i = 1, 2, \ldots, d, \\
            0 &\text{if}\,\,\, |x_{(i)}| \leq \lambda_g\,
        \end{cases}
    \end{equation}
which is the element-wise soft-thresholding operator. Therefore,
    \begin{equation} \label{eq: grad_soft_l_1}
    \left[\nabla{g}^{\lambda_g}(x)\right]_i = 
    \begin{cases}
        1 \quad &\text{if } x_{(i)} > \lambda_g \\
        -1 \quad &\text{if } x_{(i)} < \lambda_g \\
        x_{(i)}/\lambda_g &\text{if } |x_{(i)}| \leq \lambda_g\,.
    \end{cases}
\end{equation}
    This implies, under Assumption \ref{ass:living_limsup}(a) that
    \begin{equation*}
        \frac{\|\nabla{g}^{\lambda_g}(x)\|}{\|\nabla{f}(x)\|} \leq \frac{\sqrt{d}}{\|\nabla{f}(x)\|} \to 0 \qquad \text{as} \qquad \|x\| \to \infty\,.
    \end{equation*}
    Thus, condition (d) holds for the $\ell_1$-norm provided (a) holds.

\section{Choice of $\lambda_g$} \label{sec:optimal_lambda}

The choice of the regularisation parameter is critical to the performance of any proximal MCMC algorithm. Whereas \cite{chaari2016hamiltonian} fix $\lambda = 1$ for ns-HMC, such a choice may not be universally suitable over all applications and datasets. Indeed, $\lambda_g$'s value trades-off smoothness of $U^{\lambda_g}$ and its quality of approximation to the target Hamiltonian. That is, increasing the penalty makes the potential smoother, but reduces the acceptance probability since the true Hamiltonian is less preserved. This trade-off must be balanced for effective  sampling. In this section, we provide guidance on choosing $\lambda_g$.


Intuitively, we should choose $\lambda_g$ so  that the approximation error in $g^{\lambda_g}$ is small compared to the information in $f$ about $U$. For instances where $f$ contributes more information to $U$, a larger discrepancy between $g$ and $g^{\lambda_g}$ might be tolerable, but for instances where $f$ contributes less to $U$ compared to $g$, the  approximation error in $g^{\lambda_g}$ must be restricted.  With this in mind, let $x_0$ be a minimizer of $U$ so that $0 \in \partial U(x_0)$, where $\partial U$ denotes the subdifferential of $U$. Note that $x_0$ is not a minimizer of $U^{\lambda_g}$ although $x_0$ will be close to a minimizer of $U^{\lambda_g}$
for a small enough $\lambda_g$. 

Let $L = 1$ and $(x_0, p_0)$ be the initial point. Consider $\varepsilon \approx 0$ so that the approximation error to the Hamiltonian is dominated by the MY-smoothing. We track the relative change in the Hamiltonian $H$ for one leapfrog step over a range of $\lambda_g$. For sufficiently large  $\lambda_g$, the approximation error in using $g^{\lambda_g}$ will be large enough to detrimentally impact the approximation to the Hamiltonian. Thus,
consider the metric
\begin{equation} \label{eq: lambda_metric}
    R^{\lambda_g} \amp = \amp \left|\frac{H(x_0, p_0) - H\left(\tilde{T}^{\lambda_g}_{\varepsilon, L}(x_0, p_0)\right)}{H(x_0, p_0)}\right|\,.
\end{equation}
 Equation \eqref{eq: lambda_metric} measures the relative difference in the value of the Hamiltonians of the target at the initial and the proposed point, respectively. Recall that for $\varepsilon \approx 0$, $U^{\lambda_g}(x) \to U(x)$ for all $x$ as $\lambda_g \to 0$ and consequently  $H\left(\tilde{T}^{\lambda_g}_{\varepsilon, L}(x_0, p_0)\right)$ tends to $H(x_0, p_0)$ as $\lambda_g \to 0$. Thus, on the one hand, with $\varepsilon \approx 0$, the relative error $R^{\lambda_g}$ becomes very small as $\lambda_g \to 0$, ensuring high acceptance rates of the algorithm.  On the other hand, increasing $\lambda_g$ smooths the potential yielding better behaved gradients. There is thus, a trade-off between choosing a large $\lambda_g$, which ensures smoothness in the approximated potential, and a small one, which increases the acceptance rate.

\begin{figure*}
    \centering
    \includegraphics[width=0.3\linewidth]{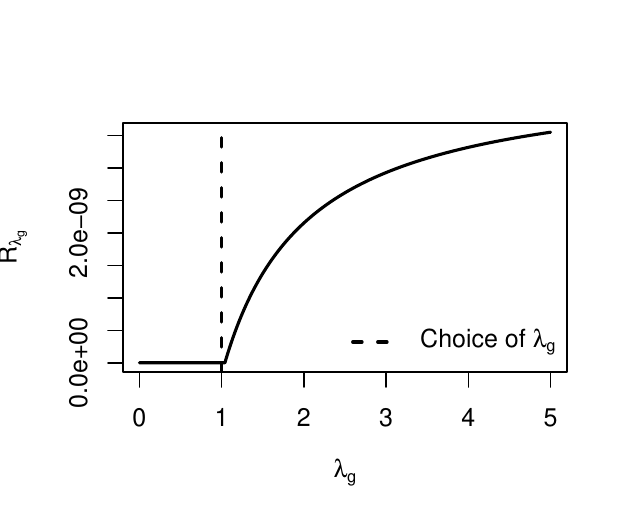}
    \includegraphics[width=0.3\linewidth]{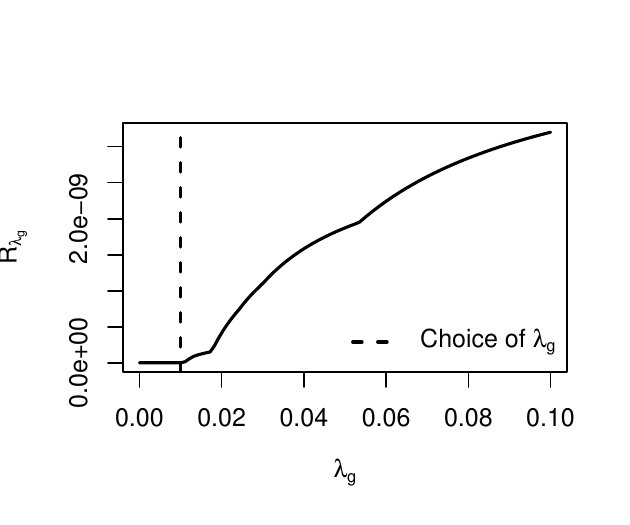}
    \includegraphics[width=0.3\linewidth]{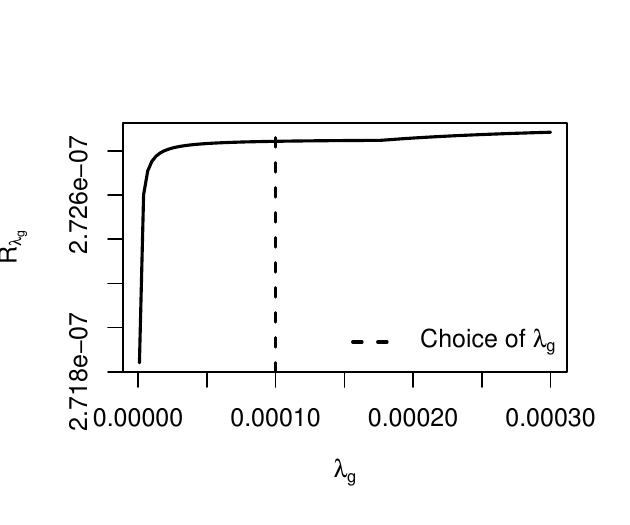}
    \caption{Choice of $\lambda_g$ for the three examples in Section~\ref{sec:examples}, with $\varepsilon = 10^{-7}$: (left) Toy example, (middle) sparse logistic regression, (right) low-rank matrix estimation.}
    \label{fig:lambda-toy}
\end{figure*}
Figure~\ref{fig:lambda-toy} shows how $R^{\lambda_g}$ varies with $\lambda_g$ for the three examples in Section~\ref{sec:examples}. In all three examples, $R^{\lambda_g}$ is nondecreasing in $\lambda_g$.
We recommend choosing the largest $\lambda_g$ so that $R^{\lambda_g}$ remains small.

\section{Numerical studies}
\label{sec:examples}
 This section demonstrates the usefulness of the p-HMC algorithm over ns-HMC and other MCMC methods. We begin with a one dimensional Bayesian lasso example which provides some validation for the proposed choice of $\lambda_g$ in the p-HMC algorithm. The next two examples, are low-dimensional signal recovery problems which illustrate the advantages of p-HMC over the alternatives in terms of mixing and estimation accuracy.
 We use $M = \mathbb{I}_{d}$ for all the implementations. The code is available at \url{https://github.com/sapratim/P-HMC}.

\subsection{Toy example} 
\label{sub:toy_example}

We first consider a simple one-dimensional problem where we compare the leapfrog discretization for both p-HMC and ns-HMC. We use this simple case study to visualize the effects of $\lambda_g$ (p-HMC) and $\lambda$ (ns-HMC) on the numerical integrator and the subsequent leapfrog trajectories' deviation from the true potential.

For $n = 100$, let $y = (y_1, y_2, \dots, y_n)$ and for $x \in \mathbb{R}$, consider the target density with potential
\begin{eqnarray}
\label{eq:toy_lasso}
	U(x)  = \dfrac{1}{2} \sum_{i=1}^{n}(y_i - x)^2 + |x|  =:  f(x) + g(x)\,.
\end{eqnarray}
For this one-dimensional Bayesian lasso-like model in \eqref{eq:toy_lasso}, we draw contours of $e^{-H(x, p)}$ to visualize how well the leapfrog steps of each algorithm hug the contour. The proximal mappings of $g$ and $U$ are available in closed form:
\begin{align}
\label{eq:proximal_toy_U}
	&\prox_{g}^{\lambda_g}(x) \amp = \amp \text{sign}(x)\cdot\max\{|x| - \lambda_g, 0\} \quad \text{and,} \\
    &\prox_U^{\lambda}(x) \amp = \amp \text{sign} (w) \cdot \max\left\{ |w| - \dfrac{\lambda}{1 + n \lambda}, 0\right\}\, \label{eq:proximal_toy_U_lambda},
\end{align}
where $w = (x + \lambda \sum_i y_i)/(1 + n \lambda)$. We generate $y_i \overset{\text{iid}}{\sim} N(1, 0.5)$. Using the proximal mappings in \eqref{eq:proximal_toy_U} and \eqref{eq:proximal_toy_U_lambda}, for fixed $\lambda_g, \lambda, \varepsilon,$ and $L$, we plot the leapfrog steps of the integrators for p-HMC and ns-HMC.  Figure~\ref{fig:contours} shows that the leapfrog steps generated by p-HMC hug the Hamiltonian contour well for an extremely wide range of $\lambda_g$ choices, whereas the leapfrog steps generated by ns-HMC are not as reliable. Specifically, ns-HMC's steps can hug the contour closely for $\lambda = 0.001$ but may go far off course for larger $\lambda$. p-HMC's ability to conserve the Hamiltonian with high fidelity across a wide range of $\lambda_g$ ensures high acceptance rates and quality of sampling that is fairly  agnostic to the choice of $\lambda_g$. 
\begin{figure*}[t]
\begin{center}
	\includegraphics[width=0.9\linewidth]{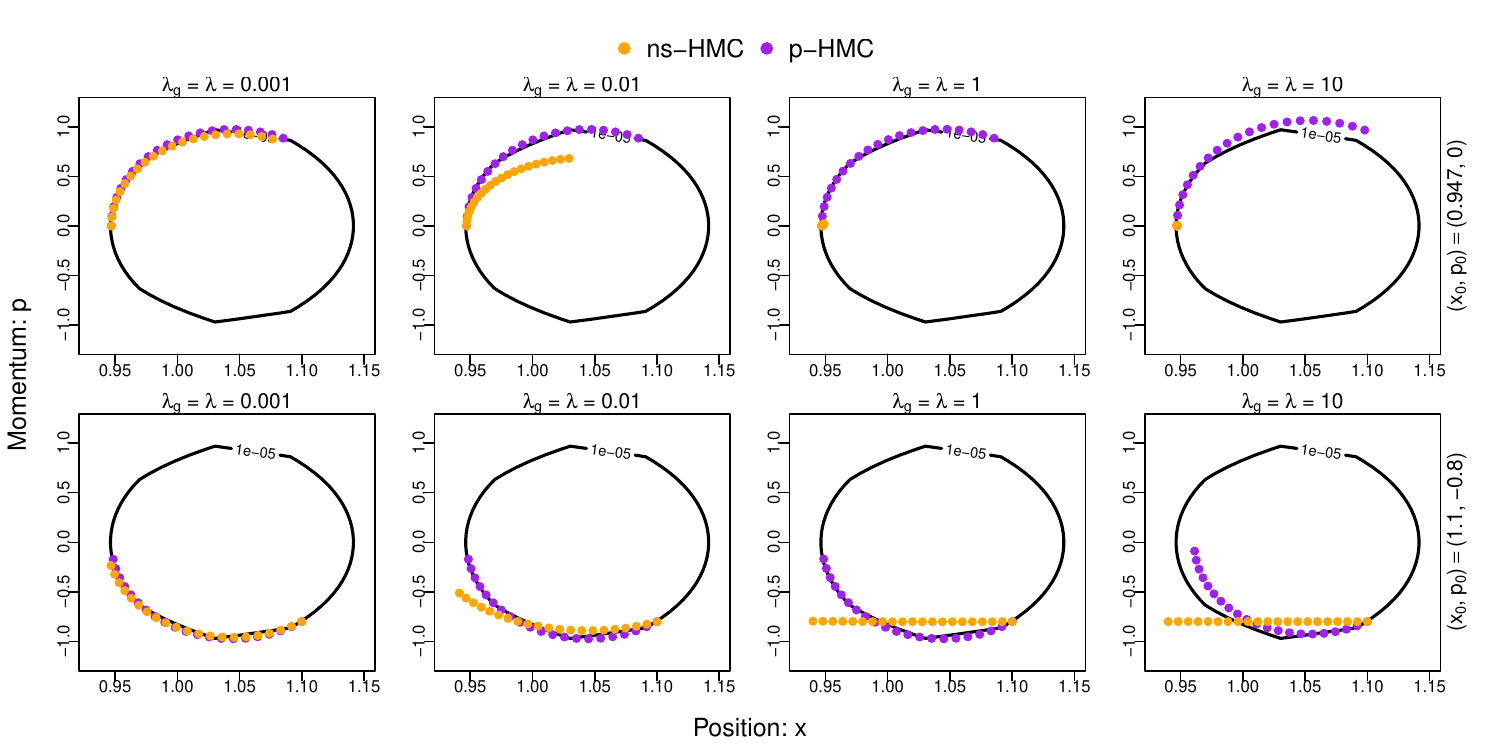}
\caption{Toy example: Leapfrog discretised step for $\varepsilon = 0.01$ and $L = 20$ with two different starting values in each row.}
\label{fig:contours}
\end{center}
\end{figure*}
\vspace{-4pt}
\subsection{Sparse logistic regression}
\label{sec:logistic}
We next consider the popular sparse logistic regression model. The data consists of  $n$ binary responses, i.e.,  $y = (y_1, \ldots, y_n)$ where each $y_i$ is zero or one, and predictor variables $x_i \in \Real^{d}$ for $i = 1, \ldots, n$. We assume the $y_i$'s are independent and  $y_i = 1$ with probability $(1 + \exp\left(-x_i\Tra\beta)\right)^{-1}$, where $\beta \in \Real^d$ is a regression parameter of interest. 
%
To recover a sparse model, we assume an $\ell_1$-prior on $\beta$. This acts as the non-differentiable component in the posterior distribution with proximal mapping given as the elementwise soft-thresholding operator \eqref{eq:softthreshold}. Proximal mapping for the full potential is not analytically available, making ns-HMC expensive to implement.
Further details are provided in Appendix~\ref{sec: logistic_details}.

We use the \texttt{Pima.tr} dataset available in library \texttt{MASS} in \texttt{R}. It contains 200 randomly selected diabetic females in Phoenix, Arizona, from a total of 532 patients collected by US National Institute of Diabetes and Digestive and Kidney Diseases \cite{smith1988using}. It has 7 covariates and a binary response vector encoding whether diabetes was tested positive or negative. 

We implement ns-HMC with $\lambda = 1$ and p-HMC with $\lambda_g = 0.01$ (see middle panel of Figure \ref{fig:lambda-toy}) respectively, and run both chains for $N = 10^5$ iterations. We fix $L = 10$ for both ns-HMC and p-HMC, with $\varepsilon = 0.00014$ for ns-HMC and $\varepsilon = 0.00192$ for p-HMC. Figure~\ref{fig:slog_acg} compares the autocorrelation functions of both chains, demonstrating the improved performance of the p-HMC sampler. 
\begin{figure}
    \centering
    \includegraphics[width=.8\linewidth]{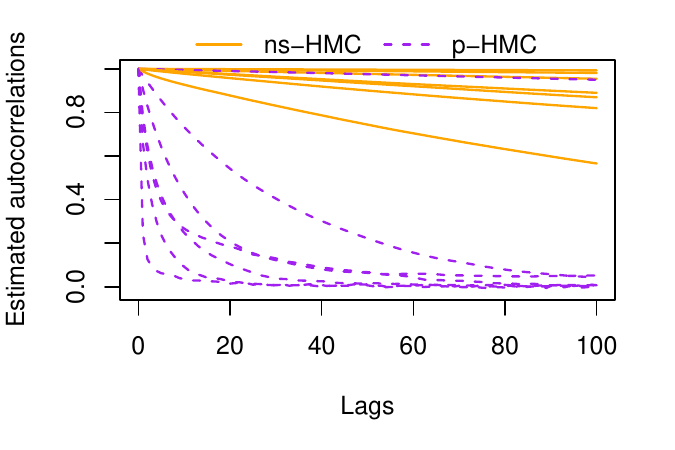}
    \caption{Estimated autocorrelation functions for ns-HMC and p-HMC for all marginal components. }
    \label{fig:slog_acg}
\end{figure}
For additional comparisons, we run a random walk Metropolis with a Gaussian proposal (RWM), the p-MALA of \cite{pereyra2016proximal}, and the my-MALA of \cite{durmus2022proximal}.  Each Markov chain was run for $N = 10^5$ iterations starting from the MAP. To further compare the performance, we repeat the simulations 100 times and report the  median, minimum, and maximum effective sample size (ESS) per second averaged over the replications of all the components. Table \ref{table; slr_ess} reports these ESS comparisons. Note that ns-HMC has the lowest ESS per second among all  algorithms across all variables and p-HMC outperforms it significantly.

\begin{minipage}[h]{0.5\textwidth}
\setlength{\tabcolsep}{6pt}
\renewcommand{\arraystretch}{1.4}
\small
\begin{center}
\begin{tabular}{|l|r|r|r|}
  \hline
   Method    &   min  &   median  &    max \\ \hline
RWM   &  4.819 & 95.063 &  503.129 \\
p-HMC  &  3.239 & 454.372 &  3556.195 \\
my-MALA  & 2.199  & 22.598 &  128.345 \\
ns-HMC  & 0.002  &  0.013  &   0.065 \\
p-MALA  & 0.113  &  0.922   &  4.661 \\ 
  \hline
\end{tabular}
\captionof{table}{Minimum, median, and maximum ESS per second for all components, averaged over 100 replications.}
\label{table; slr_ess}
\end{center}
\end{minipage}%
\hfill



\subsection{Nuclear-norm based low-rank matrix estimation}

Low-rank matrix estimation is an important problem in matrix recovery problems, e.g., matrix completion and  image denoising. The nuclear norm prior has enjoyed widespread use in these problems  \cite{CandesPlan2010, gu2014weighted,shamsi2015image, nejati2016denoising}. 
 
 We observe a matrix $Y \in \Real^{m \times k}$ where $Y = X + E$. We wish to recover the latent low-rank matrix $X$ that has been corrupted by a noise matrix $E$ with iid Gaussian entries, i.e., $e_{ij} \overset{\text{iid}}{\sim} N(0, \sigma^2)$, where $\sigma^2 > 0$ is known. The assumed prior on $X$ is,
\begin{eqnarray} 
\label{eq:nucl_norm_prior}
    \pi_0(X)  & \propto & \exp(-\alpha\|X\|_{*})\,,
\end{eqnarray}
where $\|X\|_{*}$ is the nuclear norm of $X$ and $\alpha > 0$ is constant. Recall that $\|X\|_{*}$ is the sum of the singular values of $X$. 
The posterior density of $X$ given $Y$ is 
\begin{eqnarray} 
\label{eq:matrix_post}
    \pi(X|Y) & \propto & \exp\left\{-\left(\frac{\|Y - X\|_{\text{F}}^{2}}{2\sigma^{2}} + \alpha\|X\|_{*} \right) \right\}\,,
\end{eqnarray}
where $\|X\|_{\text{F}}$ denotes the Frobenius norm of $X$. We split the potential in \eqref{eq:matrix_post} as the sum of $f(X) :=  \|Y - X\|_{\text{F}}^{2}/2\sigma^{2}$ and $g(X) := \alpha\|X\|_*$. For p-HMC, the proximal mapping of $g$ is given by,
\begin{equation*}
    \prox_{g}^{\lambda_g} (X) = \text{SVT}(X, \alpha\lambda_g)
\end{equation*}
where $\text{SVT}(X, \tau)$ denotes the singular value soft-thresholding operator on matrix $X$ with threshold $\tau$. Let $X = B\Sigma_{\tau} D\Tra$, where $\Sigma_{\tau}$ is the diagonal matrix of singular values and let $s_i$ denote the $i^{\text{th}}$ corresponding singular value. Then $\text{SVT}(X, \tau) = B\Sigma_{\tau} D\Tra$ where $\Sigma_{\tau}$ denotes the matrix obtained by replacing each singular value $s_i$ by $\max(s_i - \tau, 0)$. Similarly for ns-HMC, the proximal mapping for $U = f + g$ is \cite{recht2010guaranteed},
\begin{equation*}
    \prox_{U}^{\lambda}(X) = \text{SVT} \left( \dfrac{\lambda}{\lambda +  \sigma^{2}}Y + \dfrac{\sigma^2}{\lambda + \sigma^2}X, \dfrac{\alpha\lambda\sigma^{2}}{\lambda + \sigma^{2}}  \right)\,.
\end{equation*}

We use a $64 \times 64$ pixel checkerboard image similar to the one in \cite{pereyra2016proximal}
for our analysis. The dimension of the posterior distribution is $64^2 (= 4096)$ with $\sigma^2 = 0.01$ and $\alpha = 1.15/\sigma^2$ as chosen in \cite{pereyra2016proximal}. Following the guidance given in Section \ref{sec:optimal_lambda}, we choose $\lambda_g = 0.0001$. As in prior examples, we run  ns-HMC with $\lambda = 1$ and  run the chains for a total of $N = 10^5$ iterations. Table~\ref{table: nn_ess} compares the ESS of ns-HMC and p-HMC with that of RWM, p-MALA and my-MALA. We see that p-HMC has the highest ESS per second for all the minimum, median and the maximum variable as described below. Evidence for p-HMC's better  mixing is  corroborated by the autocorrelation plots in Figure \ref{fig:nn_ess} which shows the gains in mixing of p-HMC over ns-HMC for all the components. 

Figure \ref{fig:checker} describes variants of the checkerboard image considered for this problem. The first three images display the latent matrix $X$, observed matrix $Y$ and the denoised MAP estimate respectively. The last image visualizes the uncertainty in the estimated pixels. For each pixel a credible interval is constructed. The pixels having the widest intervals (95th percentile) are plotted in white, allowing an assessment of which pixels demonstrate larger posterior uncertainty.
\begin{minipage}[t]{0.45\textwidth}
\vspace{0pt} 
\setlength{\tabcolsep}{6pt}
\renewcommand{\arraystretch}{1.4}
\small
\begin{center}
\begin{tabular}{|l|r|r|r|}
  \hline
   Method    &   min  &   median  &    max \\ \hline
RWM    &    0.169   &  0.217   &  1.406 \\
pHMC    &  10.545   & 11.500   & 12.798 \\
myMALA   &  0.696   &  0.791   &  0.907 \\
nsHMC   &   0.004   &  0.005   &  0.007 \\ 
pMALA    &  0.640   &  0.726   &  0.829 \\
  \hline
\end{tabular}
\captionof{table}{Minimum, median, and maximum ESS per second for all components, averaged over 100 replications.}
\label{table: nn_ess}
\end{center}
\end{minipage}%

\begin{figure}
    \centering
    \includegraphics[width=.8\linewidth]{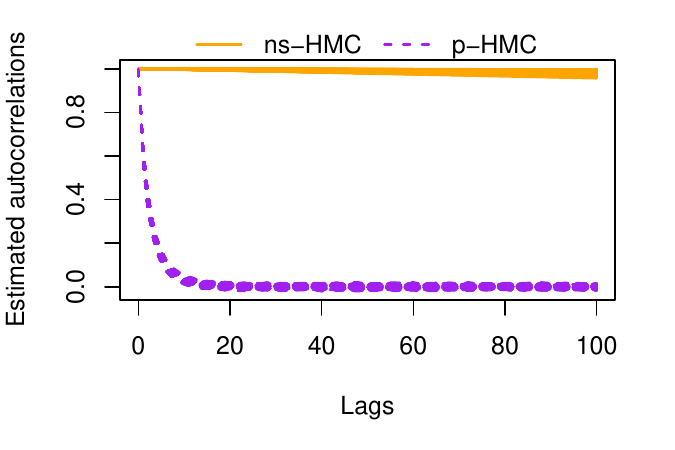}      
    \caption{Nuclear Norm: Estimated autocorrelation functions for ns-HMC and p-HMC for all marginal components.}
    \label{fig:nn_ess}
\end{figure}

\begin{figure*}
    \centering
    \includegraphics[width=.8\linewidth]{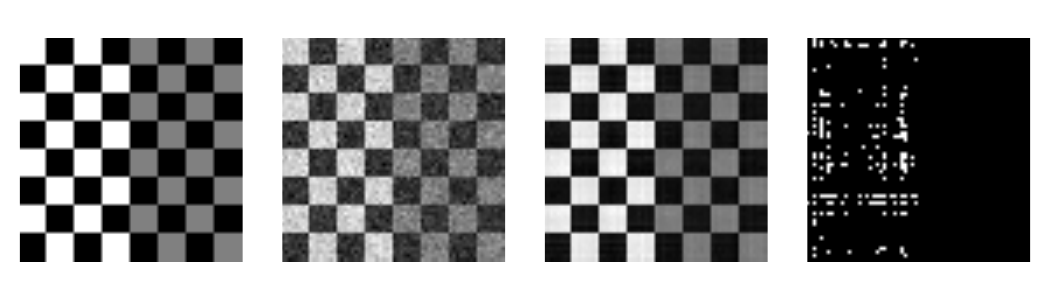}
    \caption{(From left) Original matrix, noisy matrix, MAP estimator, top 5$\%$ elements with the largest credible intervals, in white.}
    \label{fig:checker}
\end{figure*}

\section{Conclusion} \label{sec:conclusion}

This work explores an efficient proximal HMC method for non-smooth targets. It is useful for non-differentiable target densities, which are ubiquitous in signal/image processing. Leveraging the properties of proximal mappings, the described methodology achieves significant computational gains. In cases where the computational cost between the proposed p-HMC method and ns-HMC is the same, our method achieves significantly more accurate results as measured by the effective sample size. This is demonstrated in the examples where our method successfully outperforms in either the computational cost or the estimation accuracy. We also present novel theoretical analysis for p-HMC. Specifically, we provide verifiable conditions for geometric ergodicity of the p-HMC chain. Finally, we provide practical guidance on choosing the smoothing parameter $\lambda$ that tunes the degree of approximation to the target potential.


\appendices
\section{~}
\subsection{Proof of Theorem \ref{thm:geom_erg}} \label{ge_thm_proof}
We first state useful results from \cite{MR4003576}.

\begin{theorem}[\cite{MR4003576}]
\label{thm:living_hmc}
    An HMC algorithm produces a $\pi$-geometrically ergodic Markov chain if i) Assumption~\ref{ass:L} holds and ii) for $1/2 < \delta < 1$,
    \begin{eqnarray} \label{limsup_condn}
        \limsup_{\|x\| \rightarrow \infty, \|z\| \leq \|x\|^{\delta}} \left\{\|m_{L, \varepsilon}(x, z)\| - \|x\|\right\}  <  -\sqrt{2}L\varepsilon \eta(d)\,,
    \end{eqnarray}
    where $\eta(d) = \Gamma((d+1)/2)/\Gamma(d/2)$,
    and iii)
    \begin{eqnarray} 
    \label{eq:inwards_condn_hmc}
         \lim_{\|x\| \rightarrow \infty} \int_{R(x) \cap I(x)} q_{\text{H}}(x, y) dy & = & 0 \,.
    \end{eqnarray}
\end{theorem}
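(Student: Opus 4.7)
The plan is to establish geometric ergodicity of the HMC kernel $P$ by verifying the Foster--Lyapunov drift condition together with a minorization on compact sets, then invoking Theorem~15.0.1 of Meyn--Tweedie. First, I would show that Assumption~\ref{ass:L}, which puts positive mass on $L=1$, combined with the Gaussian momentum refresh, yields a one-step transition density that is positive and jointly continuous on $\Real^{d}\times\Real^{d}$. This delivers Lebesgue $\varphi$-irreducibility and makes every compact set a small set; aperiodicity follows from the positive rejection probability in any neighbourhood of $x$ where the MH ratio is bounded away from $1$.

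Second, take the Lyapunov function $V(x) = \exp(s\|x\|)$ for $s$ small enough that $\mathbb{E}_{\mathcal{L}}[V(x_{L\varepsilon})]$ is finite by Assumption~\ref{ass:L}. Decompose
\[
PV(x) \amp = \amp \mathbb{E}\!\left[\alpha(x,x_{L\varepsilon})V(x_{L\varepsilon})\right] + V(x)\cdot\mathbb{E}\!\left[1-\alpha(x,x_{L\varepsilon})\right],
\]
where the expectations are over $p_{0}\sim N(0,\mathbb{I}_{d})$ and $L\sim\mathcal{L}$. The goal is to show $PV(x)\le \gamma V(x)$ for some $\gamma<1$ whenever $\|x\|$ is large enough, and that $PV$ is bounded on compact sets.

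Third, I would split the momentum draw into a typical event $\{\|p_{0}\|\le\|x\|^{\delta}\}$ and its complement for some $\delta\in(1/2,1)$. Gaussian concentration makes the atypical event exponentially small in $\|x\|^{2\delta}$, which dominates the worst-case growth of $V$ via Assumption~\ref{ass:L}. On the typical event, write $x_{L\varepsilon} = m_{L,\varepsilon}(x,p_{0}) + L\varepsilon p_{0}$, so that
\[
\|x_{L\varepsilon}\| \amp \le \amp \|m_{L,\varepsilon}(x,p_{0})\| + L\varepsilon\|p_{0}\|.
\]
Condition (ii) of the theorem bounds $\|m_{L,\varepsilon}(x,p_{0})\|-\|x\|$ from above by $-L\varepsilon\sqrt{2}\eta(d)-\xi$ for some $\xi>0$ and $\|x\|$ large; since $\mathbb{E}\|p_{0}\| = \sqrt{2}\eta(d)$ for $p_{0}\sim N(0,\mathbb{I}_{d})$, taking expectations over the typical event produces a strict net inward drift of size $\xi$, giving $\mathbb{E}[V(x_{L\varepsilon})\mathbf{1}_{\text{typical}}]\le V(x)e^{-s\xi/2}$ for large $\|x\|$.

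Finally, the rejection term $V(x)\cdot\mathbb{E}[1-\alpha]$ must not cancel this drift. I would partition the proposal by whether it lies in $I(x)$ (inward) or its complement. Inward proposals that are rejected contribute $\int_{R(x)\cap I(x)} q_{\text{H}}(x,y)\,dy$, which vanishes as $\|x\|\to\infty$ by condition (iii); outward proposals that are rejected occur with small probability by the typical-event analysis above. Assembling these bounds yields $PV(x)\le \lambda V(x) + b\,\mathbf{1}_{C}(x)$ for some $\lambda<1$, $b<\infty$, and compact $C$, and the small set property from Step~1 closes the argument. The main obstacle is Step~4: the rejection contribution is $V(x)\cdot\mathbb{E}[1-\alpha]$, which is exactly of order $V(x)$, so one cannot afford to lose even a constant factor here. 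Condition (iii) is the precise device that rules out the scenario in which all the inward drift gain is undone by systematic rejection of inward moves, and threading its vanishing rate through the Gaussian split in Step~3 is where the delicate bookkeeping lies.
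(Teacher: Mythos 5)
This statement is not proved in the paper at all: it is Theorem~\ref{thm:living_hmc}, quoted verbatim from \cite{MR4003576} and used as a black box (together with Theorem~\ref{thm:suff_condn_hmc}) to derive the paper's own Theorem~\ref{thm:geom_erg}. So there is no in-paper argument to compare against; what you have written is a reconstruction of the proof in the cited reference, and as such it is on the right track. Your outline matches the standard drift-and-minorization route: Lebesgue irreducibility and small sets from the $L=1$ component of the Gaussian proposal, the Lyapunov function $V(x)=e^{s\|x\|}$, the split of the momentum draw at $\|p_0\|\le\|x\|^{\delta}$ with $\delta\in(1/2,1)$ so that the atypical event is $O(e^{-c\|x\|^{2\delta}})$ and is dominated by the moment bound in Assumption~\ref{ass:L}, and the identification of $\sqrt{2}\,\eta(d)=\E\|p_0\|$ as the exact first-order cost of the momentum kick that condition (ii) must beat. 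You have also correctly located the role of condition (iii): it forbids the pathological regime in which the inward drift is nullified because inward proposals are systematically rejected, exactly as in the Roberts--Tweedie analysis of MALA. Two points deserve more care if you were to write this out fully. First, on the typical event you need $\E\bigl[e^{sL\varepsilon\|p_0\|}\bigr]$ rather than $e^{sL\varepsilon\E\|p_0\|}$; by Jensen the former is strictly larger, and the argument only closes because for small $s$ its expansion is $1+sL\varepsilon\sqrt{2}\,\eta(d)+O(s^2)$, so the strict margin $\xi>0$ in \eqref{limsup_condn} survives to first order in $s$ --- this is precisely why the theorem demands strict inequality rather than $\le$. Second, your decomposition $PV(x)=\E[\alpha V(x_{L\varepsilon})]+V(x)\E[1-\alpha]$ is better handled as $PV(x)/V(x)=1+\E\bigl[\alpha\bigl(V(x_{L\varepsilon})/V(x)-1\bigr)\bigr]$, which makes transparent that you need a set of proposals of non-vanishing probability that are simultaneously inward by a definite amount (condition (ii)) and accepted (condition (iii)); the bookkeeping you flag as delicate is exactly the verification that these two events overlap with probability bounded away from zero. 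Neither issue is a fatal gap, but both require explicit treatment, and for the purposes of this paper you could simply cite \cite{MR4003576} as the authors do.
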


Condition \eqref{limsup_condn} is challenging to verify in practice, so \cite{MR4003576} provide the following sufficient condition that we use to prove Theorem~\ref{thm:geom_erg}.

\begin{theorem} [\cite{MR4003576}] 
\label{thm:suff_condn_hmc}
    For any $L \geq 1$, \eqref{limsup_condn} holds if
    \begin{align}
        &\text(a) \qquad \lim_{\|x\| \rightarrow \infty} \|\nabla U(x)\| \amp = \amp \infty\,, \label{eq:living1} \\
        &\text(b) \qquad \liminf_{\|x\| \rightarrow \infty} \frac{\langle \nabla U(x), x \rangle}{\|\nabla U(x)\| \|x\|} \amp > \amp 0\,,\label{eq:living2} \\
        &\text(c) 
        \qquad \limsup_{\|x\| \rightarrow \infty} \frac{\|\nabla U(x)\|}{\|x\|} \amp = \amp S_{l}\,, \label{eq:living3also}
        \end{align}
    for some $S_{l} < \infty$, then there exists an $\varepsilon_{0} < \infty$ such that \eqref{limsup_condn} holds provided $\varepsilon < \varepsilon_{0}$.
\end{theorem}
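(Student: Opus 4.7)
The plan is to apply Theorem~\ref{thm:living_hmc} to the p-HMC chain, noting that the p-HMC proposal mean $m_{L,\varepsilon}$ in \eqref{eq:HMC_proposal} has the same functional form as the HMC proposal with $\nabla U^{\lambda_g}$ replacing $\nabla U$, while the rejection region $R(x)$ is defined via the true target $\pi \propto e^{-U}$ and the p-HMC proposal density $q_{\text{pH}}$. Of the three hypotheses in Theorem~\ref{thm:living_hmc}, Assumption~\ref{ass:L} supplies hypothesis (i) and the hypothesized inwards convergence of $q_{\text{pH}}$ supplies hypothesis (iii), so the only remaining task is to verify \eqref{limsup_condn} for the p-HMC proposal drift.

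I would verify \eqref{limsup_condn} via a direct expansion. Set $v := m_{L,\varepsilon}(x,z) - x$, so $\|m_{L,\varepsilon}\|^2 - \|x\|^2 = 2\langle x, v\rangle + \|v\|^2$. Using the $(C_f + 1/\lambda_g)$-Lipschitz property of $\nabla U^{\lambda_g}$ to absorb intermediate leapfrog gradients into a controlled remainder for $\|z\| \le \|x\|^\delta$ with $1/2 < \delta < 1$, the linear contribution reduces to $-\varepsilon^2 L^2\langle x, \nabla U^{\lambda_g}(x)\rangle$ plus negligible terms, while $\|v\|^2 = O(\varepsilon^4\|\nabla U^{\lambda_g}(x)\|^2)$. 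I then extract three asymptotic bounds from Assumption~\ref{ass:living_limsup} to control the leading term: conditions (a) and (d) together give $\|\nabla U^{\lambda_g}(x)\| \ge (1-r)\|\nabla f(x)\| \to \infty$ with $r<1$; condition (b) together with the Lipschitzness of $\nabla g^{\lambda_g}$ gives $\|\nabla U^{\lambda_g}(x)\|/\|x\| = O(1)$; and convexity of $g^{\lambda_g}$ (inherited from that of $g$) together with (c) and (d) yields a positive liminf of $\langle x,\nabla U^{\lambda_g}(x)\rangle/(\|\nabla U^{\lambda_g}(x)\|\|x\|)$. Substituting these into the elementary bound $\|m\|-\|x\| \le (\|m\|^2-\|x\|^2)/(2\|x\|)$ shows, for $\varepsilon$ sufficiently small, that $\|m_{L,\varepsilon}\|-\|x\|$ is dominated by a negative multiple of $\varepsilon^2\|\nabla U^{\lambda_g}(x)\|$, which tends to $-\infty$ and hence lies below $-\sqrt{2}L\varepsilon\eta(d)$.

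The main technical obstacle is the inwards-convergence estimate on $\langle x,\nabla U^{\lambda_g}(x)\rangle$: the cross term $\langle \nabla g^{\lambda_g}(x), x\rangle$ must be controlled against $\langle \nabla f(x), x\rangle$. Convexity of $g^{\lambda_g}$ forces $\langle \nabla g^{\lambda_g}(x), x\rangle \ge -\|\nabla g^{\lambda_g}(0)\|\|x\|$, which is lower-order than the $c\|\nabla f(x)\|\|x\|$ drift from (c) since $\|\nabla f\|\to\infty$ by (a); condition (d) then ensures the normalizer $\|\nabla U^{\lambda_g}\|$ remains comparable to $\|\nabla f\|$ in the ratio. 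With \eqref{limsup_condn} in hand, Theorem~\ref{thm:living_hmc} applied to the p-HMC Markov chain delivers geometric ergodicity.
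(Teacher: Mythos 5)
This statement is not proved in the paper at all: Theorem~\ref{thm:suff_condn_hmc} is imported verbatim from \cite{MR4003576} and is only \emph{used} (inside the proof of Theorem~\ref{thm:geom_erg}) to reduce geometric ergodicity to checking conditions (a)--(c) for $U^{\lambda_g}$. Your write-up is therefore aimed at the wrong target. Most of it --- invoking Theorem~\ref{thm:living_hmc}, Assumption~\ref{ass:L}, the inwards-convergence integral condition, and the derivation of (a)--(c) for $\nabla U^{\lambda_g}=\nabla f+\nabla g^{\lambda_g}$ from Assumption~\ref{ass:living_limsup} via convexity/monotonicity of $\nabla g^{\lambda_g}$ --- reproduces the paper's proof of Theorem~\ref{thm:geom_erg}, not a proof of the cited lemma. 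The lemma itself is a statement about a generic potential $U$ satisfying (a)--(c); the splitting $U=f+g$, the parameter $\lambda_g$, and Assumption~\ref{ass:living_limsup} have no role in it, so they should not appear in its proof.

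The portion that does address the lemma --- writing $v=m_{L,\varepsilon}(x,z)-x$, expanding $\|m\|^2-\|x\|^2=2\langle x,v\rangle+\|v\|^2$, and using $\|m\|-\|x\|\le(\|m\|^2-\|x\|^2)/(2\|x\|)$ --- is the right skeleton, but your treatment of the intermediate leapfrog gradients has a genuine gap for $L\ge 2$. You propose to use Lipschitz continuity to replace each $\nabla U(x_{i\varepsilon})$ by $\nabla U(x)$ ``plus negligible terms.'' The displacement $\|x_{i\varepsilon}-x\|$ contains the momentum contribution $i\varepsilon\|z\|$ with $\|z\|\le\|x\|^{\delta}$, so after pairing with $x$ and dividing by $2\|x\|$ the remainder is of order $\varepsilon^{3}\|x\|^{\delta}$, whereas the drift you are keeping is of order $\varepsilon^{2}\|\nabla U(x)\|$. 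Conditions (a) and (c) only force $\|\nabla U(x)\|\to\infty$ at an unspecified, possibly logarithmic, rate, so the ``remainder'' can dominate the drift no matter how small $\varepsilon$ is; the linearization around $x$ does not close. The argument in \cite{MR4003576} instead first shows by induction (using (c) and $\delta<1$) that $\|x_{i\varepsilon}-x\|$ is an arbitrarily small fraction of $\|x\|$ for $\varepsilon$ small and $\|x\|$ large, and then applies the inwards condition (b) \emph{at each intermediate point} $x_{i\varepsilon}$, so that every term $-\varepsilon^{2}(L-i)\langle x,\nabla U(x_{i\varepsilon})\rangle$ is itself bounded above by a negative multiple of $\varepsilon^{2}\|\nabla U(x_{i\varepsilon})\|\,\|x\|$; condition (a) then drives the total below any fixed threshold, and (c) controls $\|v\|^{2}/(2\|x\|)$ by $O(\varepsilon^{4})$ times the same gradient norms. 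If you want to prove this lemma rather than cite it, that is the step you need to supply.
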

\begin{proof}
[Proof of Theorem~\ref{thm:geom_erg}]
It suffices to prove that \eqref{eq:living1}, \eqref{eq:living2} and \eqref{eq:living3also} hold under Assumption \ref{ass:living_limsup}. We first verify \eqref{eq:living1}. By parts (a) and (b) of Assumption \ref{ass:living_limsup} 
\begin{align*}
    \lim_{\|x\| \to \infty} \|\nabla U^{\lambda_g}(x)\| &= \lim_{\|x\| \to \infty} \|\nabla f(x) + \nabla g^{\lambda_g}(x)\| \\
    &\geq \lim_{\|x\| \to \infty} \|\nabla f(x)\| - \lim_{\|x\| \to \infty} \|g^{\lambda_g}(x)\| \\
    &= \lim_{\|x\| \to \infty} \|\nabla f(x)\| \left(1 - \frac{\|\nabla g^{\lambda_g}(x)\|}{\|\nabla f(x)\|}\right) \\
    &\rightarrow \quad \infty.
\end{align*}
We next verify \eqref{eq:living2}. Note that
\begin{align} \label{eq: liminf_ineq_con2}
    &\liminf_{\|x\| \to \infty} \frac{\langle \nabla U^{\lambda_g}(x), x\rangle}{\|\nabla U^{\lambda_g}(x)\| \|x\|} \nonumber \\
    &\quad \geq \liminf_{\|x\| \to \infty} \frac{\langle \nabla f(x), x\rangle}{\|\nabla U^{\lambda_g}(x)\| \|x\|} +  \liminf_{\|x\| \to \infty} \frac{\langle \nabla g^{\lambda_g}(x), x\rangle}{\|\nabla U^{\lambda_g}(x)\| \|x\|}\,.
\end{align}
We derive bounds on the two terms on the right hand side of the inequality in \eqref{eq: liminf_ineq_con2}. We first show that
\begin{equation} \label{eq: condn_g_ineq2}
    \liminf_{\|x\| \to \infty} \frac{\langle \nabla g^{\lambda_g}(x), x\rangle}{\|\nabla U^{\lambda_g}(x)\| \|x\|} \amp \geq \amp 0\,.
\end{equation}
Since $g$ is convex, its MY-envelope $g^{\lambda_g}$ is also convex. Recall that the gradient of a differentiable convex function is monotone, i.e., for any $x$ and $y$ in $\Real^d$,
\begin{equation}
\label{eq:monotone}
    \left\langle \nabla g^{\lambda_g}(x) - \nabla g^{\lambda_g}(y), x - y \right\rangle \geq 0.
\end{equation}
Taking $y = 0$ in \eqref{eq:monotone} gives
\begin{equation*}
        \left\langle \nabla g^{\lambda_g}(x), x \right\rangle \geq   \left  \langle \nabla g^{\lambda_g}(0), x \right\rangle\,.
\end{equation*}
Therefore,
\begin{align}
 \label{eq:liminf_ineq_rhs_term2}
    \liminf_{\|x\| \to \infty} \frac{\left\langle \nabla g^{\lambda_g}(x), x\right\rangle}{\|\nabla U^{\lambda_g}(x)\| \|x\|}  &\geq  \liminf_{\|x\| \to \infty} \frac{\left\langle \nabla g^{\lambda_g}(0), x \right\rangle}{\|\nabla U^{\lambda_g}(x)\| \|x\|}\\
     &\geq  - \liminf_{\|x\| \to \infty} \frac{\|\nabla g^{\lambda_g}(0)\|\|x\|}{\|\nabla U^{\lambda_g}(x)\| \|x\|} \\
     &= 0\,.
\end{align}
Now we turn our attention to bounding the second term in \eqref{eq: liminf_ineq_con2}. Note that
\begin{align*}
    &\liminf_{\|x\| \to \infty} \frac{\langle \nabla f(x), x\rangle}{\|\nabla U^{\lambda_g}(x)\| \|x\|}  =  \liminf_{\|x\| \to \infty} \frac{\langle \nabla f(x), x\rangle}{\|\nabla f(x) + \nabla g^{\lambda_g}(x)\|\|x\|} \\
    &\geq  \liminf_{\|x\| \to \infty}\left(\frac{1}{(1 + \|\nabla g^{\lambda_g}(x)\|/\|\nabla f(x)\|)}\cdot\frac{\langle \nabla f(x), x\rangle}{\|\nabla f(x)\|\|x\|}\right)\,.
\end{align*}
By parts (b) and (c) of Assumption \ref{ass:living_limsup}, the right hand side of the inequality in \eqref{eq:liminf_ineq_rhs_term2} is strictly positive which together with \eqref{eq: condn_g_ineq2} establishes \eqref{eq:living2}.

Finally, we verify \eqref{eq:living3also}. By parts (b) and (d) of Assumption~\ref{ass:living_limsup},
\begin{align*}
    &\frac{\|\nabla U^{\lambda_g}(x)\|}{\|x\|} \leq  \frac{\|\nabla f(x)\|}{\|x\|} + \frac{\|\nabla g^{\lambda_g}(x)\|}{\|x\|} \\
    &\Rightarrow \limsup_{\|x\| \to \infty}\, \frac{\|\nabla U^{\lambda_g}(x)\|}{\|x\|} \\
    &\leq \limsup_{\|x\| \to \infty}\, \frac{\|\nabla f(x)\|}{\|x\|} + \limsup_{\|x\| \to \infty}\, \frac{\|\nabla g^{\lambda_g}(x)\|}{\|\nabla f(x)\|}\frac{\|\nabla f(x)\|}{\|x\|} \\
    &\leq \limsup_{\|x\| \to \infty}\, \frac{\|\nabla f(x)\|}{\|x\|} \\
    & \quad + \lim_{\|x\| \to \infty}\frac{\|\nabla g^{\lambda_g}(x)\|}{\|\nabla f(x)\|}\cdot \limsup_{\|x\| \to \infty}\frac{\|\nabla f(x)\|}{\|x\|} \\
    &= \amp \left(1 + \lim_{\|x\| \to \infty}\frac{\|\nabla g^{\lambda_g}(x)\|}{\|\nabla f(x)\|}\right) \limsup_{\|x\| \to \infty}\frac{\|\nabla f(x)\|}{\|x\|}\\
    &< \quad \infty\,.
\end{align*}
Consequently, by Theorem~\ref{thm:suff_condn_hmc} the p-HMC algorithm is geometrically ergodic under Assumptions~\ref{ass:L} and \ref{ass:living_limsup}.
\end{proof}

\subsection{Details of Section \ref{sec:logistic}}
\label{sec: logistic_details}
We derive the expression of the posterior density for the sparse logistic regression example and also provide details on employing ns-HMC. Let $\pi_0(\beta)$ denote the prior on $\beta$
\begin{equation}
\label{eq:logistic_lasso_prior}
    \pi_0(\beta) = \prod_{j=1}^{d}\frac{\alpha\exp{(-\alpha|\beta_{i}|)}}{2}\,, \qquad \beta \in \Real^d\,.
\end{equation}
The resulting posterior density is
\begin{align*}
    \pi(\beta|y)  &
    \propto  \exp \left(- \left( \sum_{i=1}^{n} \log(1 + e^{x_{i}\Tra\beta}) - y_{i}x_{i}\Tra\beta + \alpha \|\beta\|_{1}\right)\right) \\
    &=  \exp(-U(\beta))\,.
\end{align*}
Define $U(\beta) =  f(\beta) + g(\beta)\,$ where
\begin{align*}
    f(\beta) =  \sum_{i=1}^{n} \log(1 + e^{x_{i}\Tra\beta}) - y_{i}x_{i}\Tra\beta \quad \text{and} \quad g(\beta) \amp = \amp \alpha \|\beta\|_{1}\,.
\end{align*}
Recall
\begin{equation*}
    \nabla g^{\lambda_g}(\beta) = \frac{\beta - \prox_{g}^{\lambda_g}(\beta)}{\lambda}\,,
\end{equation*}
where $\prox_{g}^{\lambda_g}$ is the soft-thresholding operator. We compare our approach to ns-HMC method \cite{chaari2016hamiltonian} which 
requires the proximal mapping of $U^\lambda$
\begin{align*}
\prox_{U}^{\lambda}(\beta) 
&= \arg \min_{z \in \Real^{d}} \bigg\{ 
    \sum_{i=1}^{n} \left(\log(1 + e^{x_{i}^\top z}) - y_{i} x_{i}^\top z\right) \\
&\qquad \qquad \qquad \qquad \qquad+ \alpha \|z\|_{1} 
    + \frac{1}{2\lambda} \|\beta - z\|^2 
\bigg\} \,.
\end{align*}
This does not have a closed form expression and requires an iterative solver. We employ the FISTA solver \cite{BeckTeboulle2009}.

 \bibliographystyle{ieeetr}
 \bibliography{ref}

\begin{thebibliography}{10}

\bibitem{manolakis2005statistical}
D.~Manolakis, V.~Ingle, and S.~Kogon, {\em Statistical and {A}daptive {S}ignal
  {P}rocessing: Spectral Estimation, Signal Modeling, Adaptive Filtering, and
  Array Processing}.
\newblock Artech House signal processing library, Artech House, 2005.

\bibitem{MR2241189}
D.~L. Donoho, ``Compressed sensing,'' {\em IEEE Trans. Inform. Theory},
  vol.~52, no.~4, pp.~1289--1306, 2006.

\bibitem{MR1379464}
D.~L. Donoho and I.~M. Johnstone, ``Adapting to unknown smoothness via wavelet
  shrinkage,'' {\em J. Amer. Statist. Assoc.}, vol.~90, no.~432,
  pp.~1200--1224, 1995.

\bibitem{MR1311089}
D.~L. Donoho and I.~M. Johnstone, ``Ideal spatial adaptation by wavelet
  shrinkage,'' {\em Biometrika}, vol.~81, no.~3, pp.~425--455, 1994.

\bibitem{costa2015sparse}
F.~Costa, H.~Batatia, L.~Chaari, and J.-Y. Tourneret, ``Sparse {EEG} source
  localization using {B}ernoulli {L}aplacian priors,'' {\em IEEE Transactions
  on Biomedical Engineering}, vol.~62, no.~12, pp.~2888--2898, 2015.

\bibitem{doi:10.1137/120874989}
M.~Lebrun, A.~Buades, and J.~M. Morel, ``A nonlocal {B}ayesian image denoising
  algorithm,'' {\em SIAM Journal on Imaging Sciences}, vol.~6, no.~3,
  pp.~1665--1688, 2013.

\bibitem{chaux2006iterative}
C.~Chaux, P.~L. Combettes, J.-C. Pesquet, and V.~R. Wajs, ``Iterative image
  deconvolution using overcomplete representations,'' in {\em 2006 14th
  European Signal Processing Conference}, pp.~1--5, IEEE, 2006.

\bibitem{park2003super}
S.~C. Park, M.~K. Park, and M.~G. Kang, ``Super-resolution image
  reconstruction: {A} technical overview,'' {\em IEEE signal processing
  magazine}, vol.~20, no.~3, pp.~21--36, 2003.

\bibitem{doi:10.1137/15M1016552}
V.~I. Morgenshtern and E.~J. Cand\`{e}s, ``Super-resolution of positive
  sources: The discrete setup,'' {\em SIAM Journal on Imaging Sciences},
  vol.~9, no.~1, pp.~412--444, 2016.

\bibitem{lustig2007sparse}
M.~Lustig, D.~Donoho, and J.~M. Pauly, ``Sparse {MRI}: The application of
  compressed sensing for rapid {MR I}maging,'' {\em Magnetic Resonance in
  Medicine: An Official Journal of the International Society for Magnetic
  Resonance in Medicine}, vol.~58, no.~6, pp.~1182--1195, 2007.

\bibitem{10.1007/978-3-540-24673-2_23}
Z.~Khan, T.~Balch, and F.~Dellaert, ``An {MCMC}-based particle filter for
  tracking multiple interacting targets,'' in {\em Computer Vision - ECCV 2004}
  (T.~Pajdla and J.~Matas, eds.), (Berlin, Heidelberg), pp.~279--290, Springer
  Berlin Heidelberg, 2004.

\bibitem{1188770}
J.~Spall, ``Estimation via {M}arkov chain {M}onte {C}arlo,'' {\em IEEE Control
  Systems Magazine}, vol.~23, no.~2, pp.~34--45, 2003.

\bibitem{MR2080278}
C.~P. Robert and G.~Casella, {\em Monte {C}arlo {S}tatistical {M}ethods}.
\newblock Springer Texts in Statistics, Springer-Verlag, New York, second~ed.,
  2004.

\bibitem{7089944}
J.~L. Noga and W.~J. Fitzgerald, ``Modelling sea clutter using conditional
  heteroscedastic models,'' in {\em 9th European Signal Processing Conference
  (EUSIPCO 1998)}, pp.~1--4, 1998.

\bibitem{MR2766856}
A.~Brockwell, P.~Del~Moral, and A.~Doucet, ``Sequentially interacting {M}arkov
  chain {M}onte {C}arlo methods,'' {\em The Annals of Statistics}, vol.~38,
  no.~6, pp.~3387--3411, 2010.

\bibitem{o1994interpolation}
J.~{\'o}~Ruanaidh and W.~Fitzgerald, ``Interpolation of missing samples for
  audio restoration,'' {\em Electronics Letters}, vol.~30, no.~8, pp.~622--623,
  1994.

\bibitem{559620}
R.~Morris, W.~Fitzgerald, and A.~Kokaram, ``A sampling based approach to line
  scratch removal from motion picture frames,'' in {\em Proceedings of 3rd IEEE
  International Conference on Image Processing}, vol.~1, pp.~801--804 vol.1,
  1996.

\bibitem{FITZGERALD20013}
W.~Fitzgerald, ``Markov chain {M}onte {C}arlo methods with applications to
  signal processing,'' {\em Signal Processing}, vol.~81, no.~1, pp.~3--18,
  2001.
\newblock Special section on Markov Chain Monte Carlo (MCMC) Methods for Signal
  Processing.

\bibitem{1550195}
A.~Doucet and X.~Wang, ``Monte {C}arlo methods for signal processing: {A}
  review in the statistical signal processing context,'' {\em IEEE Signal
  Processing Magazine}, vol.~22, no.~6, pp.~152--170, 2005.

\bibitem{6952466}
Y.~Marnissi, A.~Benazza-Benyahia, E.~Chouzenoux, and J.-C. Pesquet,
  ``Majorize-minimize adapted {M}etropolis-{H}astings algorithm. {A}pplication
  to multichannel image recovery,'' in {\em 2014 22nd European Signal
  Processing Conference (EUSIPCO)}, pp.~1332--1336, 2014.

\bibitem{6945861}
C.~Gilavert, S.~Moussaoui, and J.~Idier, ``Efficient {G}aussian sampling for
  solving large-scale inverse problems using {MCMC},'' {\em IEEE Transactions
  on Signal Processing}, vol.~63, no.~1, pp.~70--80, 2015.

\bibitem{Luengosurvey}
D.~Luengo, L.~Martino, M.~Bugallo, V.~Elvira, and S.~S{\"a}rkk{\"a}, ``A survey
  of {M}onte {C}arlo methods for parameter estimation,'' {\em EURASIP Journal
  on Advances in Signal Processing}, vol.~2020, no.~1, p.~25, 2020.

\bibitem{hanson2001markov}
K.~M. Hanson, ``Markov chain {M}onte {C}arlo posterior sampling with the
  {H}amiltonian method,'' in {\em Medical Imaging 2001: Image Processing},
  vol.~4322, pp.~456--467, SPIE, 2001.

\bibitem{neal2011mcmc}
R.~M. Neal {\em et~al.}, ``{MCMC} using {H}amiltonian dynamics,'' {\em Handbook
  of Markov chain Monte Carlo}, vol.~2, no.~11, p.~2, 2011.

\bibitem{izmailov2021bayesian}
P.~Izmailov, S.~Vikram, M.~D. Hoffman, and A.~G.~G. Wilson, ``What are
  {B}ayesian neural network posteriors really like?,'' in {\em International
  Conference on Machine Learning}, pp.~4629--4640, PMLR, 2021.

\bibitem{altmann2014unsupervised}
Y.~Altmann, N.~Dobigeon, and J.-Y. Tourneret, ``Unsupervised post-nonlinear
  unmixing of hyperspectral images using a {H}amiltonian {M}onte {C}arlo
  algorithm,'' {\em IEEE Transactions on Image Processing}, vol.~23, no.~6,
  pp.~2663--2675, 2014.

\bibitem{mousavi2021hamiltonian}
A.~Mousavi, R.~Monsefi, and V.~Elvira, ``Hamiltonian adaptive importance
  sampling,'' {\em IEEE Signal Processing Letters}, vol.~28, pp.~713--717,
  2021.

\bibitem{LLORENTE2022310}
F.~Llorente, E.~Curbelo, L.~Martino, V.~Elvira, and D.~Delgado,
  ``{MCMC}‐driven importance samplers,'' {\em Applied Mathematical
  Modelling}, vol.~111, pp.~310--331, 2022.

\bibitem{pereyra2016proximal}
M.~Pereyra, ``Proximal {M}arkov chain {M}onte {C}arlo {A}lgorithms,'' {\em
  Statistics and Computing}, vol.~26, pp.~745--760, 2016.

\bibitem{roberts1996exponential}
G.~O. Roberts and R.~L. Tweedie, ``Exponential convergence of {L}angevin
  distributions and their discrete approximations,'' {\em Bernoulli},
  pp.~341--363, 1996.

\bibitem{MR144188}
J.-J. Moreau, ``Fonctions convexes duales et points proximaux dans un espace
  hilbertien,'' {\em C. R. Acad. Sci. Paris}, vol.~255, pp.~2897--2899, 1962.

\bibitem{chaari2016hamiltonian}
L.~Chaari, J.-Y. Tourneret, C.~Chaux, and H.~Batatia, ``A {H}amiltonian {M}onte
  {C}arlo method for non-smooth energy sampling,'' {\em IEEE Transactions on
  Signal Processing}, vol.~64, no.~21, pp.~5585--5594, 2016.

\bibitem{guo2023above}
M.~Guo, B.~Huang, L.~Yang, and G.~Jiang, ``Above ground level estimation for
  radar altimetry using proximal {H}amiltonian {M}onte {C}arlo,'' {\em IEEE
  Journal on Miniaturization for Air and Space Systems}, vol.~5, no.~3,
  pp.~128--137, 2023.

\bibitem{durmus2022proximal}
A.~Durmus, {\'E}.~Moulines, and M.~Pereyra, ``A {P}roximal {M}arkov chain
  {M}onte {C}arlo method for {B}ayesian inference in imaging inverse problems:
  When {L}angevin meets {M}oreau,'' {\em SIAM Review}, vol.~64, no.~4,
  pp.~991--1028, 2022.

\bibitem{parikh2014proximal}
N.~Parikh, S.~Boyd, {\em et~al.}, ``Proximal {A}lgorithms,'' {\em Foundations
  and trends{\textregistered} in Optimization}, vol.~1, no.~3, pp.~127--239,
  2014.

\bibitem{wright1999bayesian}
W.~Wright, ``Bayesian approach to neural-network modeling with input
  uncertainty,'' {\em IEEE Transactions on Neural Networks}, vol.~10, no.~6,
  pp.~1261--1270, 1999.

\bibitem{MR3524962}
J.~V. Candy, {\em Bayesian signal processing}.
\newblock Wiley Series on Adaptive and Cognitive Dynamic Systems, John Wiley \&
  Sons, Inc., Hoboken, NJ, second~ed., 2016.
\newblock Classical, modern, and particle filtering methods.

\bibitem{aubert2008variational}
G.~Aubert and J.-F. Aujol, ``A variational approach to removing multiplicative
  noise,'' {\em SIAM journal on applied mathematics}, vol.~68, no.~4,
  pp.~925--946, 2008.

\bibitem{cai2010fast}
J.-F. Cai, R.~H. Chan, and M.~Nikolova, ``Fast two-phase image deblurring under
  impulse noise,'' {\em Journal of Mathematical Imaging and Vision}, vol.~36,
  no.~1, pp.~46--53, 2010.

\bibitem{marnissi2017bayesian}
Y.~Marnissi, {\em Bayesian methods for inverse problems in signal and image
  processing}.
\newblock PhD thesis, Universit{\'e} Paris-Est, 2017.

\bibitem{MR2723361}
C.~P. Robert, {\em The {B}ayesian choice}.
\newblock Springer Texts in Statistics, Springer, New York, second~ed., 2007.
\newblock From decision-theoretic foundations to computational implementation.

\bibitem{meyn:twee:2009}
S.~P. Meyn and R.~L. Tweedie, {\em Markov Chains and Stochastic Stability}.
\newblock Cambridge University Press, 2009.

\bibitem{metr:1953}
N.~Metropolis, A.~W. Rosenbluth, M.~N. Rosenbluth, A.~H. Teller, and E.~Teller,
  ``Equations of state calculations by {F}ast {C}omputing {M}achine,'' {\em
  Journal of Chemical Physics}, vol.~21, 1953.

\bibitem{beskos2013optimal}
A.~Beskos, N.~S. Pillai, G.~O. Roberts, J.~Sanz-Serna, and A.~Stuart, ``Optimal
  tuning of the hybrid {M}onte {C}arlo algorithm,'' {\em Bernoulli}, vol.~19,
  no.~5a, pp.~1501--1534, 2013.

\bibitem{MR4003576}
S.~Livingstone, M.~Betancourt, S.~Byrne, and M.~Girolami, ``On the geometric
  ergodicity of {H}amiltonian {M}onte {C}arlo,'' {\em Bernoulli}, vol.~25,
  no.~4A, pp.~3109--3138, 2019.

\bibitem{Duane1987216}
S.~Duane, A.~Kennedy, B.~J. Pendleton, and D.~Roweth, ``Hybrid {M}onte
  {C}arlo,'' {\em Physics Letters B}, vol.~195, no.~2, pp.~216--222, 1987.

\bibitem{devlin2024no}
L.~Devlin, M.~Carter, P.~Horridge, P.~L. Green, and S.~Maskell, ``The no-u-turn
  sampler as a proposal distribution in a sequential {M}onte {C}arlo sampler
  without accept/reject,'' {\em IEEE Signal Processing Letters}, vol.~31,
  pp.~1089--1093, 2024.

\bibitem{Combettes2011}
P.~L. Combettes and J.-C. Pesquet, {\em Proximal Splitting Methods in Signal
  Processing}, pp.~185--212.
\newblock New York, NY: Springer New York, 2011.

\bibitem{MR1491362}
R.~T. Rockafellar and R.~J.-B. Wets, {\em Variational {A}nalysis}, vol.~317 of
  {\em Grundlehren der mathematischen Wissenschaften [Fundamental Principles of
  Mathematical Sciences]}.
\newblock Springer-Verlag, Berlin, 1998.

\bibitem{10.1093/biomet/82.4.711}
P.~J. Green, ``Reversible {J}ump {M}arkov chain {M}onte {C}arlo computation and
  {B}ayesian model determination,'' {\em Biometrika}, vol.~82, pp.~711--732, 12
  1995.

\bibitem{MR3757517}
D.~Vats, J.~M. Flegal, and G.~L. Jones, ``Strong consistency of multivariate
  spectral variance estimators in {M}arkov chain {M}onte {C}arlo,'' {\em
  Bernoulli}, vol.~24, no.~3, pp.~1860--1909, 2018.

\bibitem{parkandcasella}
T.~Park and G.~Casella, ``The {B}ayesian lasso,'' {\em Journal of the American
  Statistical Association}, vol.~103, no.~482, pp.~681--686, 2008.

\bibitem{382009}
D.~Donoho, ``De-noising by soft-thresholding,'' {\em IEEE Transactions on
  Information Theory}, vol.~41, no.~3, pp.~613--627, 1995.

\bibitem{smith1988using}
J.~W. Smith, J.~E. Everhart, W.~C. Dickson, W.~C. Knowler, and R.~S. Johannes,
  ``Using the {ADAP} learning algorithm to forecast the onset of diabetes
  mellitus,'' in {\em Proceedings of the annual symposium on computer
  application in medical care}, p.~261, 1988.

\bibitem{CandesPlan2010}
E.~J. Candes and Y.~Plan, ``Matrix completion with noise,'' {\em Proceedings of
  the IEEE}, vol.~98, no.~6, pp.~925--936, 2010.

\bibitem{gu2014weighted}
S.~Gu, L.~Zhang, W.~Zuo, and X.~Feng, ``Weighted {N}uclear norm minimization
  with application to image denoising,'' in {\em Proceedings of the IEEE
  conference on computer vision and pattern recognition}, pp.~2862--2869, 2014.

\bibitem{shamsi2015image}
Z.~H. Shamsi, H.~S. Oh, and D.-G. Kim, ``Image denoising using low rank
  minimization with modified noise estimation,'' {\em arXiv preprint
  arXiv:1504.03439}, 2015.

\bibitem{nejati2016denoising}
M.~Nejati, S.~Samavi, H.~Derksen, and K.~Najarian, ``Denoising by low-rank and
  sparse representations,'' {\em Journal of Visual Communication and Image
  Representation}, vol.~36, pp.~28--39, 2016.

\bibitem{recht2010guaranteed}
B.~Recht, M.~Fazel, and P.~A. Parrilo, ``Guaranteed minimum-rank solutions of
  linear matrix equations via nuclear norm minimization,'' {\em SIAM review},
  vol.~52, no.~3, pp.~471--501, 2010.

\bibitem{BeckTeboulle2009}
A.~Beck and M.~Teboulle, ``A fast iterative shrinkage-thresholding algorithm
  for linear inverse problems,'' {\em SIAM Journal on Imaging Sciences},
  vol.~2, no.~1, pp.~183--202, 2009.

\end{thebibliography}

\end{document}